\documentclass{easychair}
\newcommand{\I}{\emph{I}\xspace}
\newcommand{\You}{\emph{You}\xspace}
\newcommand{\My}{\emph{My}\xspace}

\newcommand{\Me}{\emph{Me}\xspace}
\newcommand{\Your}{\emph{Your}\xspace}

\newcommand{\DS}{\mathbf{DS}}
\newcommand{\dDS}{\mathbf{dDS}}

\newcommand{\At}{\mathsf{At}}
\renewcommand{\emptyset}{\varnothing}

\newcommand\ie{\hbox{\textit{i.e.}}}

\newcommand{\seq}{\Rightarrow}
\newcommand{\rs}[3]{#1;#2\seq #3}
\newcommand{\rc}{\mathcal{R}}

\newcommand{\M}{\mathbb{M}}

\newcommand{\A}{\mathsf{A}}
\newcommand{\R}{\mathsf{R}}
\newcommand{\V}{\mathsf{V}}
\newcommand{\g}{\mathsf{g}}
\newcommand{\ag}{\mathsf{a}}
\renewcommand{\b}{\mathsf{b}}
\renewcommand{\c}{\mathsf{c}}

\renewcommand{\qed}{\hfill$\blacksquare$}

\newtheorem{example}{Example}
\newtheorem{lemma}{Lemma}

\newtheorem{theorem}{Theorem}
\newtheorem{definition}{Definition}
\newtheorem{proposition}{Proposition}
\newtheorem{remark}{Remark}
\newtheorem{corollary}{Corollary}
\renewenvironment{proof}{\noindent\textit{Proof:}\quad}{\qed}

\newcommand{\pnlP}{\langle \bigdoublewedge+ \rangle }
\newcommand{\pnlN}{\langle\bigdoublewedge-\rangle}
\newcommand{\pnlPN}{\langle\bigdoublewedge\pm\rangle}
\newcommand{\pnlOP}{\langle\oplus\rangle}
\newcommand{\pnlON}{\langle\ominus\rangle}
\newcommand{\dplus}{\meddiamondplus}
\newcommand{\dminus}{\meddiamondminus}
\newcommand{\bplus}{\boxplus}
\newcommand{\bminus}{\boxminus}
\newcommand{\dplusminus}{\Diamond^{\pm}}

\DeclareMathOperator*{\bigdoublewedge}{\wedge\mkern-15mu\wedge}

\newcommand{\bbM}{\mathbb{M}}
\newcommand{\bfP}{\mathbf{P}}
\newcommand{\bfO}{\mathbf{O}}
\newcommand{\Rinv}{R_{\leftrightarrow}}

\newcommand{\ccpnlmodels}{\mathfrak{M}_{\textit{ccPNL}}}
\newcommand{\pnlmodels}{\mathfrak{M}_{\textit{PNL}}}
\newcommand{\namedmodels}{\mathfrak{M}_{\textit{N}}}

\newcommand{\PNL}{\textbf{PNL}}

\newcommand{\cc}{\textbf{cc}}
\newcommand{\dPNL}{\textbf{dPNL}}

\def\headline#1{\hbox to \hsize{\hrulefill\quad\lower.3em\hbox{#1}\quad\hrulefill}}

\newcommand\proofsystem[1]{\mbox{\slshape #1}\xspace}

\newcommand\aLKF {\hbox{\proofsystem{LKF}\kern-2pt$^a$}\xspace}

\newcommand\aLJF {\hbox{\proofsystem{LJF}\kern-2pt$^a$}\xspace}

\newcommand{\veenldotsveen}{\veen\kern-5pt\ldots\kern-2pt\veen}

\newcommand{\isemp}[1]{\ifthenelse{\isempty{#1}}{\cdot}{#1}}

\newcommand{\veen}{\vee^{\!-}}

\newcommand{\mkpos}[1]{\partial\kern -1pt_{\scriptscriptstyle +}\kern -1pt(#1)}
\newcommand{\mkneg}[1]{\partial\kern -1pt_{\scriptscriptstyle -}\kern -1pt(#1)}

\newcommand{\delayop}{\ensuremath{\partial}}

\newcommand{\delpfr}{\delayop_{\scriptscriptstyle +}^{\kern 1pt r}}\newcommand{\delpfl}{\delayop_{\scriptscriptstyle +}^{\kern 2pt l}}\newcommand{\delnfr}{\delayop_{\scriptscriptstyle -}^{\kern 1pt r}}\newcommand{\delnfl}{\delayop_{\scriptscriptstyle -}^{\kern 2pt l}}

\newcommand{\biasn}[1]{\delta^-\kern -1pt(#1)}
\newcommand{\biasp}[1]{\delta^+\kern -1pt(#1)}

\newcommand*\mdelim[3]{\mathopen{}\left#1#3\right#2\mathclose{}}

\makeatletter
\newcommand*{\cxs}{\@ifnextchar\i{\cxs@two}{\@ifnextchar\bgroup{\cxs@one}{}}}
\newcommand*{\cxs@one}[1]{\def\cxs@{#1}\mdelim{\lbrace}{\rbrace}{\ifx\cxs@\empty\mkern 3mu\else #1\fi}\cxs@one@decor }
\newcommand*{\cxs@two}[3]{\def\cxs@{#3}\mdelim{\lbrace\strut^{#2}}{\rbrace}{\ifx\cxs@\empty\mkern 3mu\else #3\fi}\cxs@one@decor }
\def\cxs@one@decor{\@ifnextchar\dots{\@firstoftwo{\dotsm\cxs@one@decor}}{\cxs}}

\def\cx@delete@right#1*{{#1}^{\star}\cx@continuation}
\def\cx@delete@always#1{{#1}^{\ast}\cx@continuation}

\def\cx@delete@star#1*{\@ifnextchar*{\cx@delete@right{#1}}{\cx@delete@always{#1}}}

\newcommand*{\@makecontextual}[2]{
	\newcommand*{#1}{\@ifnextchar*{\cx@delete@star{#2}}{#2\cx@continuation}}}
\newcommand*{\cx@continuation}[1][]{_{#1}\cxs}

\@makecontextual{\Ex}{}

\@makecontextual{\Cx}{\mathcal{C}}
\@makecontextual{\Dx}{\Delta}
\@makecontextual{\Lx}{\Lambda}
\@makecontextual{\Px}{\Pi}
\@makecontextual{\Lxpr}{\Lambda'}
\@makecontextual{\Rx}{\Gamma}
\@makecontextual{\RxP}{\Gamma^{P}}
\@makecontextual{\Rxs}{\Gamma^*}
\@makecontextual{\Rxpr}{\Gamma'}
\@makecontextual{\Dxb}{\Delta^{\rt{\bot}}}
\@makecontextual{\Dxs}{\Delta^*}
\@makecontextual{\Rxb}{\Gamma^{\rt{\bot}}}
\@makecontextual{\Lxb}{\Lambda^{\rt{\bot}}}
\@makecontextual{\Dxn}{\lf\Delta}
\@makecontextual{\Dxt}{\rt\Delta}
\@makecontextual{\Dxp}{\ct\Delta}
\@makecontextual{\Rxp}{\Pi^{+}}
\@makecontextual{\Rxm}{\Pi^{-}}
\@makecontextual{\Dxm}{\Delta^{-}}
\@makecontextual{\Ox}{\Omega}

\newcommand*{\BR}{\@ifnextchar\i{\br@two}{\@ifnextchar\bgroup{\br@one}{}}}
\newcommand*{\br@one}[1]{\def\br@{#1}\mdelim{\lbrack}{\rbrack}{\ifx\br@\empty\mkern 3mu\else #1\fi}}
\newcommand*{\br@two}[3]{\def\br@{#3}\mdelim{\lbrack\strut^{#2}}{\rbrack}{\ifx\br@\empty\mkern 3mu\else #3\fi}}

\newcommand*{\@makeoperator}[2]{
	\newcommand*{#1}{\mathrm{#2}\mdelim{(}{)}
	}
}
\@makeoperator{\fm}{et}

\makeatother

 \usepackage[utf8]{inputenc}
\usepackage{pmboxdraw}
\usepackage{fancyvrb}
\usepackage{xcolor}
\usepackage{etoolbox}
\usepackage{amsmath,amssymb,trimclip,adjustbox}
\usepackage{stackengine,amssymb,graphicx}
\usepackage{color}
\usepackage{colortbl}
\usepackage{xspace}

\usepackage{modalops}
\usepackage{xifthen}

\usepackage{multicol}
\usepackage{longtable}

\usepackage{ebproof}	

\usepackage{proof}

\usepackage{comment} \usepackage{tabularx}

\usepackage{cancel}
\usepackage[colorinlistoftodos,prependcaption,textsize=tiny]{todonotes}

\usepackage{stackengine}

\usepackage{cleveref}
 \usepackage{doc}
\usepackage{tikz}
\usetikzlibrary{positioning, quotes}
\DeclareUnicodeCharacter{2714}{\textcolor{teal}{$\checkmark$}}
\DeclareUnicodeCharacter{274C}{\textcolor{red}{$\times$}}

\DeclareUnicodeCharacter{25C6}{$\dplus$}
\DeclareUnicodeCharacter{25C7}{$\dminus$}
\DeclareUnicodeCharacter{2228}{$\vee$}
\DeclareUnicodeCharacter{2227}{$\wedge$}

\title{Reasoning About Group Polarization:\\
From Semantic Games to Sequent Systems}

\author{
Robert Freiman\inst{1}\thanks{Research supported by FWF project P 18563.}
\and
Carlos Olarte\inst{2}\thanks{The work of Olarte has been partially supported by the SGR project PROMUEVA (BPIN
2021000100160) under the supervision of Minciencias (Ministerio de Ciencia Tecnolog\'ia e Innovaci\'on, Colombia). Olarte acknowledge also support from the NATO
Science for Peace
and Security Programme through grant number G6133 (project SymSafe). }
\and
   Elaine Pimentel\inst{3}\thanks{Pimentel has received funding from the European Union's Horizon 2020 research and innovation programme under the Marie Sk\l odowska-Curie grant agreement Number 101007627.}
\and
    Christian G.\ Ferm\"uller\inst{1}}

\institute{
  TU-Wien, Austria\\
  \email{\{robert,chrisf\}@logic.at}
\and
     LIPN, CNRS UMR 7030, Universit\'{e} Sorbonne Paris Nord, France \\
   \email{olarte@lipn.univ-paris13.fr}\\
\and
   Computer Science Department UCL, UK\\
   \email{e.pimentel@ucl.ac.uk}
 }

\authorrunning{Freiman, Olarte, Pimentel and Ferm\"uller}

\titlerunning{Reasoning About Group Polarization}

\begin{document}

\maketitle 

\begin{abstract}
    Group polarization, the phenomenon where individuals  become more
extreme after interacting, has been gaining attention, especially with the rise of
social media  shaping people's opinions. Recent
interest has emerged in formal reasoning about group polarization using logical
systems. In this work we consider the modal logic PNL that captures the notion
of agents agreeing or disagreeing on a given topic. 
Our contribution involves enhancing PNL with advanced formal reasoning techniques, instead of relying on axiomatic systems for analyzing group polarization. To achieve this, we introduce a semantic game tailored for (hybrid) extensions of PNL. This game fosters dynamic reasoning about concrete network models, aligning with our goal of strengthening PNL's effectiveness in studying group polarization.
We show how this
semantic game leads to a provability game by systemically exploring the truth
in all models. This leads to the first  cut-free  sequent systems
for some variants of PNL. Using polarization of formulas, the proposed calculi
can be modularly adapted to consider different frame properties of the
underlying model. 

 \end{abstract}

\section{Introduction}\label{sec:intro} Group polarization -- where the
opinions or beliefs of individuals within a group become more extreme or
polarized after interacting with each other -- is rapidly gaining attraction,
especially with the advent of social media platforms that have played a key role
in the polarization of social, political, and democratic processes. This
phenomenon is mainly studied in psychology
\cite{myers1976group,isenberg1986group} and political philosophy
\cite{sunstein1999law,sunstein2007group}. More recently, logicians have taken
up the challenge of formal reasoning about social networks and changes in 
agents's beliefs. Take for instance 
the Facebook logic  \cite{DBLP:conf/tark/SeligmanLG13} (an
epistemic logic endowed with a symmetric ``friendship'' relation); the Tweeting logic
\cite{DBLP:conf/lori/XiongASZ17} (formalizing announcements in a network); 
the logic for reasoning about social belief and change propagation
\cite{DBLP:journals/synthese/LiuSG14}, etc. 

We  focus on the modal logic \PNL\ 
\cite{DBLP:journals/jolli/XiongA20}, which refers to Kripke frames with two
types of disjoint and symmetric reachability relations. The individuals in a
social network are identified with worlds of the frame, and they are related
either as ``friends'' (positive) or as ``enemies'' (negative), but not both at
the same time. These relationships can be understood in different ways: Instead
of genuine friendship or enduring enmity, they may simply signify agreement or
disagreement on a particular issue. Unlike the aforementioned logics, \PNL~was
designed to reason about positive and negative relations among agents, a key
aspect for defining and measuring polarization \cite{Bramson2017}. In fact,
polarization can actually be studied in a neutral, network theory framework.
Under the framework of balance theory, it is possible to investigate the essential conditions required for network stability in a fully polarized social network.
For instance, the \emph{local balance} condition that prohibits
triangles of nodes with two positive and one negative connection can be
associated with the formation of clusters of pairwise positively connected
nodes that are negatively connected to all nodes outside the cluster (see
\Cref{ex:balance}). 

We take inspiration from a work by Pedersen, Smets, and {\AA}gnotes
\cite{DBLP:journals/logcom/PedersenSA21}, where \PNL~is extended in
various ways to axiomatically characterize modally undefinable frame
properties, including the disjointness of the two relations and collective
connectedness. The main challenge is the axiomatization of the balance
property, which requires extensions of \PNL~with nominals,
 dynamic and hybrid operators. 
 
Our approach to logical reasoning about group
polarization is also based on \PNL~but focuses on a different aspect
of formal reasoning about the corresponding models {via games and proof systems. Games are a powerful tool to bridge the gap between intended and formal semantics, often offering a conceptually more natural approach to logic than the common paradigm of model-theoretic semantics.  
In semantic games~\cite{Hintikka1973-HINLLA-2}, every instance of the game is played over a formula $\phi$ and a model $\M$ by two players, usually called \I (or \Me) and \You. At each point in the game, one of the players acts as the proponent ($\mathbf P$), while the other acts as the opponent ($\mathbf O$) of the current formula. The set of actions at each stage is dictated by the main connective of the current formula. 
In contrast to semantic games, provability games~\cite{Lorenzen1978-LORDLJ-2} do not refer to truth in a particular model but to {\em logical validity}. The game is also played by two players, \Me and \You, and consists of attacking assertions of formulas made by the other player and defending against these attacks. In this work, we will introduce both a semantic game and a provability game for (hybrid) extensions of \PNL.}

{We start by proposing a 
semantic game that characterizes
the truth in a given network model. This provides an
alternative to the standard definition of an evaluation function which supports
a dynamic form of reasoning about concrete network models (\Cref{sec:pnl}).}
We move on by arguing that effective formal reasoning with the relevant logics requires more
than (just) Hilbert-style axiom systems. Rather, the automated search for
proofs calls for Gentzen-style systems that respect ({a restricted form of}) the
subformula property. In proof-theoretic terms, we are looking for a cut-free
sequent system. Hence, our next step is to turn the semantic game over single
models into a provability game (\Cref{sec:dis-game}), characterizing logical validity. To this end,
we define disjunctive states for a game that is not restricted to a single
model, but systematically explores the truth in all models. This method leads
to {\em the first} Gentzen-style systems  for variants of \PNL~(\Cref{sec:proofs}), which
modularly adapts to different frame properties by faithfully capturing the rules 
for \emph{elementary} games.

Models of social learning and opinion dynamics aim to understand the role of certain social factors in the acceptance/rejection of opinions. They can be useful to explain alternative scenarios, such as consensus or polarization. In this context, the positive and negative relationships are not permanent. Instead, they can vary over time when
\emph{enemies} reconcile, new \emph{friendships}/agreements emerge, or some actors begin to disagree with others. In \Cref{sec:extensions}, we show how the {\em global adding} and {\em local link change}  modalities  of
\cite{DBLP:journals/logcom/PedersenSA21} (inspired by sabotage modal logic
\cite{DBLP:journals/igpl/ArecesFH15,DBLP:journals/logcom/AucherBG18, DBLP:journals/logcom/BenthemLSY23})
can be defined in our framework.
{As a plus, we present in \cite{tool} a 
prototypical implementation of the proposed games 
using rewriting logic and Maude \cite{DBLP:journals/jlp/Meseguer12,DBLP:journals/jlap/DuranEEMMRT20}.}

\section{A Game Semantics for PNL}
\label{sec:pnl}

In this section, we revisit the positive and negative relations logic
\cite{DBLP:journals/jolli/XiongA20,DBLP:journals/logcom/PedersenSA21} with
nominals (\PNL) and its standard Kripke semantics, proposing a novel semantic
game  for \PNL~that we prove to be adequate. Paving the way for the provability
game introduced in Section \ref{sec:dis-game}, we also propose an alternative
presentation of \PNL~that internalizes the nominals.

\subsection{Kripke semantics for \PNL}\label{sec:pnl-k}
Let $\A=\{\ag,\b,\ldots\}$ be a non-empty set of agents,
$\At=\{p,q,\ldots\}$ be a countable set of propositional variables, and $N=\{i,j,\ldots\}$ be a countable set of \emph{nominals}. The language of \PNL~is generated by the following grammar:
$$\phi ::= p  \mid R^+(i,j)\mid R^-(i,j) \mid \neg \phi \mid \phi \wedge \phi \mid \phi \vee \phi \mid \dplus \phi \mid \dminus \phi\mid [A]\phi$$
where $p\in \At$, and $i,j\in N$. 
The propositional connectives $\top$, $\bot$, $\to$, and the (dual)
 modalities $\bplus$ and $\bminus$ can be obtained in the usual way.

 Formulas of the form $p$, $R^+(i,j)$, or $R^-(i,j)$ are called \emph{elementary}.
The proposition $R^+(i,j)$ states that $i$ is a \emph{friend} of (or, more generally, \emph{agrees} with)
 $j$, and proposition $R^-(i,j)$ states that agent $i$ is an \emph{enemy} of (or \emph{disagrees} with) $j$. 
The formula $\dplus \phi$ (resp. $\dminus \phi$) states that $\phi$ holds for  a friend (resp.\ an enemy). The global
modality $[A]\phi$ states that $\phi$ holds for all the agents. 
We use $R^\pm$ to denote either $R^+$ or $R^-$, and 
 $\dplusminus$ to denote either $\dplus$ or $\dminus$.

A model $\mathbb{M}$ is a tuple $\langle \A,\R^+,\R^-,\V,\g\rangle$ where $\A$
is a set (of agents), $\g:N\rightarrow \A$ is called \emph{denotation
function}, $\R^+,\R^-\subseteq \A\times \A$, and $\V:\At\rightarrow
\mathcal{P}(\A)$. A model is a \PNL-model if 
$\R^+$ is reflexive,  and 
$\R^+$ and $\R^-$ are both symmetric and 
non-overlapping, i.e.,  for all $\ag,\b\in \A$, $(\ag,\b)\notin \R^+$ or $(\ag,\b)\notin \R^-$. 
We say that a model $\M$ is \emph{collectively connected}, or a \cc-\PNL-model,
if,  additionally,  for all $\ag,\b\in \A$,   $(\ag,\b)\in \R^+$ or $(\ag,\b)\in \R^-$.
The class of all \PNL~models (\cc-\PNL-models) is denoted by  $\pnlmodels$ ($\ccpnlmodels$).

\begin{figure}
$\small
\qquad\begin{array}{lll l lll}
    \mathbb{M},\ag \Vdash p &\text{ iff } \ag\in \V(p) &\quad& 
    \mathbb{M},\ag \Vdash \neg \phi &\text{ iff } \mathbb{M},\ag \not \Vdash \phi\\
    \mathbb{M},\ag \Vdash R^+(i,j) &\text{ iff } (\g(i),\g(j))\in \R^+ & \quad & 
    \mathbb{M},\ag \Vdash R^-(i,j) &\text{ iff } (\g(i),\g(j))\in \R^-\\
    \mathbb{M},\ag \Vdash \phi \wedge \psi &\text{ iff } \mathbb{M},\ag \Vdash \phi \text{ and } \mathbb{M},\ag \Vdash \psi &\quad&
    \mathbb{M},\ag \Vdash \phi \vee \psi &\text{ iff } \mathbb{M},\ag \Vdash \phi \text{ or } \mathbb{M},\ag \Vdash \psi\\
\mathbb{M},\ag \Vdash \dplus\phi & \multicolumn{6}{l}{\text{ iff there is } \b\in \A \text{ such that } (\ag,\b)\in \R^+ \text{ and } \mathbb{M},\b \Vdash \phi}\\
    \mathbb{M},\ag \Vdash \dminus\phi &\multicolumn{6}{l}{\text{ iff there is } \b\in \A \text{ such that } (\ag,\b)\in \R^- \text{ and } \mathbb{M},\b \Vdash \phi}\\
    \mathbb{M},\ag \Vdash [A]\phi &\multicolumn{6}{l}{\text{ iff } \mathbb{M},\b \Vdash \phi \text{ for all } \b\in \A}
\end{array}
$
\caption{Kripke semantics for \PNL~\cite{DBLP:journals/logcom/PedersenSA21}\label{fig:ksem}.}
\end{figure}

The Kripke semantics of \PNL~is in \Cref{fig:ksem}.
A formula $\phi$ is true over $\mathbb{M}$, written $\mathbb{M}\Vdash \phi$ iff
$\mathbb{M},\ag\Vdash \phi$, for all agent $\ag\in\A$. For a set of formulas
$\Phi$, we write $\mathbb{M}\models \Phi$ iff $\mathbb{M}\Vdash \Phi$ for all $\phi
\in \Phi$. A formula $\phi$ is ((\cc-)\PNL-) valid iff $\mathbb{M}\Vdash\phi$ for every
((\cc-)\PNL)-model $\mathbb{M}$. For a class of models $\mathfrak{M}$, we write
$\Phi\models_{\mathfrak{M}} \phi$ iff $\mathbb{M}\Vdash \phi$ for every model
$\mathbb{M}\in \mathfrak{M}$ with $\mathbb{M}\models \Phi$.

A model is called \emph{named} if its denotation function is surjective, i.e., every
agent has a name. Let $\namedmodels$ be the class of named models. The following result is immediate. 

\begin{lemma}
Let $\Phi\cup\{\phi\}$ be a finite set of formulas. Then
$\Phi\Vdash_{\pnlmodels}\phi \iff \Phi\Vdash_{\pnlmodels\cap \namedmodels}\phi$ and $\Phi\Vdash_{\ccpnlmodels}\phi \iff \Phi\Vdash_{\ccpnlmodels\cap \namedmodels}\phi$.
\end{lemma}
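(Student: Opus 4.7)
The plan is to prove both equivalences by dispatching the two implications separately; the case of $\ccpnlmodels$ proceeds exactly as the case of $\pnlmodels$ since the construction below preserves collective connectedness. The direction $(\Rightarrow)$ is immediate from the inclusions $\pnlmodels\cap\namedmodels\subseteq\pnlmodels$ and $\ccpnlmodels\cap\namedmodels\subseteq\ccpnlmodels$, so I focus on $(\Leftarrow)$, which I would establish by contraposition: starting from a counter-model $\M$ with $\M\Vdash\Phi$ but $\M,\ag_0\not\Vdash\phi$ for some $\ag_0\in\A$, I will construct a \emph{named} (cc-)PNL counter-model of the same instance.

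The construction proceeds in two stages. First, I would carry out a L\"owenheim--Skolem-style witness closure to pass to a countable sub-domain. Let $N_0$ be the finite set of nominals occurring in $\Phi\cup\{\phi\}$, let $S$ be the (finite) set of subformulas of $\Phi\cup\{\phi\}$, and set $\A_0=\g(N_0)\cup\{\ag_0\}$. Iteratively define $\A_{n+1}\supseteq\A_n$ by adding: for every $\dplusminus\psi\in S$ and every $\ag\in\A_n$ with $\M,\ag\Vdash\dplusminus\psi$, a witness $\b\in\A$ with $(\ag,\b)\in\R^\pm$ and $\M,\b\Vdash\psi$; and, for every $[A]\psi\in S$ with $\M\not\Vdash[A]\psi$, one counter-witness $\b\in\A$ with $\M,\b\not\Vdash\psi$. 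Since $S$ is finite and each $\A_n$ is finite, $\A'=\bigcup_n\A_n$ is countable. Let $\M'$ be the submodel induced on $\A'$ by restricting both relations and the valuation. Reflexivity of $\R^+$, symmetry of $\R^\pm$, non-overlap, and (in the cc case) collective connectedness are all inherited by restriction, so $\M'$ remains a (cc-)PNL-model; a routine induction on $\psi\in S$ then shows $\M',\ag\Vdash\psi\iff\M,\ag\Vdash\psi$ for every $\ag\in\A'$, with the modal cases dispatched exactly by the witnesses just added.

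Second, I would extend the denotation function to name every agent in $\A'$. Since $N\setminus N_0$ is countably infinite and $\A'$ is at most countable, I can fix any surjection $h$ from $N\setminus N_0$ onto $\A'\setminus\g(N_0)$ and set $\g'(i)=\g(i)$ for $i\in N_0$ and $\g'(i)=h(i)$ otherwise. The model $\M''$ obtained from $\M'$ by replacing $\g$ with $\g'$ is named by construction and is a (cc-)PNL-model; since $\g'$ agrees with $\g$ on $N_0$, which contains all nominals relevant to $\Phi\cup\{\phi\}$, we still have $\M''\Vdash\Phi$ and $\M'',\ag_0\not\Vdash\phi$, as required. The main delicate point is the semantic preservation in Stage 1 for the global modality $[A]$, since it quantifies over the entire domain; but its truth value is independent of the evaluation world, so a single counter-witness per failed $[A]$-subformula suffices, keeping each step of the closure finite and the argument a genuine finitary reduction to the named setting.
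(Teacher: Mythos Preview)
Your argument is correct. The paper itself offers no proof, declaring the result ``immediate''; what the authors presumably have in mind is just your Stage~2: since only finitely many nominals occur in $\Phi\cup\{\phi\}$ and $N$ is countably infinite, one can redefine $\g$ on the unused nominals to hit every agent, and truth of the relevant formulas is unaffected. That one-step argument, however, presupposes that the agent set $\A$ is at most countable, which the paper's definition of model does not explicitly require. Your Stage~1 L\"owenheim--Skolem closure fills exactly this gap, cutting an arbitrary (cc-)\PNL-model down to a countable elementary-enough submodel before renaming. So your proof is strictly more careful than the paper's, and the extra work buys coverage of the uncountable case.

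One small presentational point: in Stage~1 you restrict relations and valuation to $\A'$ but do not say what denotation function $\M'$ carries (a model must have $\g:N\to\A$ total). Since $\g(N_0)\subseteq\A_0\subseteq\A'$, you can take any map agreeing with $\g$ on $N_0$ and sending $N\setminus N_0$ into $\A'$; this choice is irrelevant to the preservation induction because only nominals from $N_0$ occur in $S$, and it is overwritten in Stage~2 anyway.
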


Using this lemma, we give an alternative presentation of the semantics 
where we explicitly test, using elementary formulas,  the existence of 
$R^{\pm}$-successors.
Let $\ag=\g(i)$. Then, we define: \\

$
\begin{array}{lll}
    \mathbb{M},\ag \Vdash \dplusminus\phi &\text{ iff there is } j\in N \text{ such that } \mathbb{M},\g(j)\Vdash R^{\pm}(i,j) \text{ and }  \mathbb{M},\g(j)\Vdash \phi\\
\mathbb{M},\ag \Vdash [A]\phi &\text{ iff } \mathbb{M},\g(j)\Vdash \phi, \text{ for all } j\in N, 
\end{array}
$

\begin{remark}\label{rem:surjg}
At a later point, we need the following observation: if $\g$ is surjective, even
if restricted to $N'\subseteq N$, then  $N$ in the above truth conditions
can equivalently be replaced by $N'$. \end{remark}

\subsection{Semantic Game}\label{sec:game-semantics}
The proposed \emph{semantic game} is played over a \PNL-model $\M=(\A,\R^+,\R^-,\V,\g)$ by two
players, \Me and \You, who argue about the truth of a formula $\phi$ at an
agent $\ag$. At each stage of the game, one player acts as \emph{proponent}, while the
other acts as \emph{opponent} of the claim that  $\phi$ is true at 
$\ag$. We represent the situation where \I am the proponent (and \You are the
opponent) by the \emph{game state} $\mathbf{P}, \ag:\phi$, and the situation
where \I am the opponent (and \You are the proponent) by $\mathbf{O},
\ag:\phi$. We call a game state \emph{elementary} if its involved formula is
elementary. For a game state $g$, we denote the game starting at $g$ over the
model $\M$ by $\mathbf{G}_\M(g)$.

The game proceeds by reducing the involved formula $\phi$ to an elementary formula. The following rules of the game describe the possible choices of the players depending on the current game state,
when playing a game over the model $\M$.

\begin{definition}
    Let $\M$ be a \PNL-model. The semantic game is defined by the rules in \Cref{fig:game-rules}. 
\end{definition}

\begin{figure}
~~~\resizebox{.92\textwidth}{!}{
\begin{minipage}[t]{\textwidth}
    \hrulefill
\begin{description}
\item[$(\mathbf{P}_\wedge)$] At $\mathbf{P}, \ag: \phi_1\wedge \phi_2$, \You
    choose between  $\mathbf{P},\ag:\phi_1$ and 
    $\mathbf{P},\ag:\phi_2$ to continue the game. 
\item[$(\mathbf{O}_\wedge)$] \vspace{-2mm}At $\mathbf{O}, \ag:
    \phi_1\wedge \phi_2$, \I choose between $\mathbf{O},\ag:\phi_1$
    and $\mathbf{O},\ag:\phi_2$ to continue the game. 

\item[$(\mathbf{P}_\vee)$] At $\mathbf{P}, \ag: \phi_1\vee \phi_2$, \I choose between $\mathbf{P},\ag:\phi_1$ and $\mathbf{P},\ag:\phi_2$ to continue the game
\item[$(\mathbf{O}_\vee)$] \vspace{-2mm}At $\mathbf{O}, \ag: \phi_1\vee \phi_2$, \You choose between $\mathbf{O},\ag:\phi_1$ and $\mathbf{O},\ag:\phi_2$ to continue the game.

\item[$(\mathbf{P}_\neg)$] At $\mathbf{P}, \ag: \neg \phi$, the game continues with $\mathbf{O}, \ag: \phi$.
\item[$(\mathbf{O}_\neg)$] \vspace{-2mm}At $\mathbf{O}, \ag: \neg \phi$, the game continues with $\mathbf{P},\ag: \phi$.

\item[$(\mathbf{P}_{\Diamond^\pm})$] At $\mathbf{P}, \ag: \Diamond^\pm \phi$, \You win if there are no $\R^\pm$-successors of $\ag$. Otherwise, \I choose an $\R^\pm$-successor $\b$ and the game continues with $\mathbf{P},\b:\phi$.
\item[$(\mathbf{O}_{\Diamond^\pm})$] \vspace{-2mm} At $\mathbf{O},\ag: \Diamond^\pm \phi$, \I win if there are no $\R^\pm$-successors of $\ag$. Otherwise, \You choose an $\R^\pm$-successor $\b$ and the game continues with $\mathbf{O},\b:\phi$.

\item[$(\mathbf{P}_{[A]})$] At $\mathbf{P},\ag :  [A]\phi$, \You choose an agent $\b$ and the game continues with $\mathbf{P},\b:\phi$.
\item[$(\mathbf{O}_{[A]})$] \vspace{-2mm}At $\mathbf{O},\ag :  [A]\phi$, \I choose an agent $\b$ and the game continues with $\mathbf{O},\b:\phi$.

\item[$(\mathbf{P}_{el})$] Let $\phi_{e}$ be an elementary formula. \I win and \You lose at $\mathbf{P},\ag:\phi_{e}$ iff $~\mathbb{M},\ag \models \phi_{e}$. Otherwise, \You win and \I lose.
\item[$(\mathbf{O}_{el})$] \vspace{-2mm}At $\mathbf{O},\ag : \phi_{e}$, \I win and \You lose iff $\mathbb{M},\ag\not \models \phi_{e}$. Otherwise, \You win and \I lose.
\end{description}
    \hrulefill
\end{minipage}
}
\caption{Semantic game given a \PNL-model $\M$.\label{fig:game-rules}}
\end{figure}

In general, every two-person, zero-sum, win-lose game is usually represented by
a \emph{game tree}, i.e., a labeled tree whose nodes are game states. In our
case, the root of the game tree representing the game
$\mathbf{G}_\mathbb{M}(g)$ is $g$. The children of each node in the game tree
are exactly the possible choices of the corresponding player. For instance, if
$h=\mathbf{P}, \ag: \phi_1\wedge \phi_2$ appears in the game tree, then its
children are $\mathbf{P},\ag:\phi_1$ and $\mathbf{P},\ag:\phi_2$. Each node in
the tree is labeled either ``I'', or ``Y'', depending on which player is to
move in the corresponding game state, and we label 
the nodes  $\mathbf{P}, \ag: \neg \phi$ and $\mathbf{O}, \ag:
\neg \phi$ with ``I'' (even though there is no choice involved in these game
states). For instance, the node corresponding to the game state $h$ above is
``Y'', since it is \Your choice in $\mathbf{P}:\phi_1\wedge \phi_2$. The leaves
of the tree receive the label of the winning player. 
 A \emph{run} of the game is a maximal path through the game tree.

 \begin{example}\label{ex:balance}
     Let 
     $ \mathbf{(4B)} = 
        ((\dplus\dplus p \vee \dminus\dminus p)\to \dplus p) \wedge
        ((\dplus \dminus p \vee \dminus\dplus p)\to \dminus p)
    $. This formulas 
    specifies \emph{local balance} \cite{DBLP:journals/logcom/PedersenSA21}
    and   captures the idea that 
``the enemy of my enemy is my friend'',  ``the friend
of my enemy is my enemy'',  and ``the friend of my friend is my friend''.
 A collectively connected network where $[A]4B$ holds is a polarized network, 
where agents can be divided into two opposing groups \cite{balance}.
Notions such a weak-balance \cite{weak-balance} can be also formalized in \PNL~\cite{DBLP:journals/logcom/PedersenSA21}.
    $\I$ have a winning strategy for the game $\bfP,\ag : 4B$
    on $\bbM_1$ 
    while $\You$ have a winning strategy for the same game on $\bbM_2$ where (omitting self-loops for $\R^+$):

    \begin{tabularx}{.6\textwidth}{ X X X X X }
        $\bbM_1=$ & \parbox[c]{\hsize}{\includegraphics[scale=0.85]{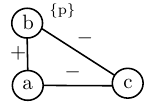}} & \qquad &
        $\bbM_2=$ & \parbox[c]{\hsize}{\includegraphics[scale=0.85]{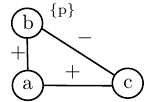}} 
    \end{tabularx}

    For $\bbM_1$, in the first conjunct, \I pick ($\bfP_\vee$)
    $\dplus p$ and then $\b$ in ($\bfP_{\dplus}$); for the second conjunct,
    \I pick the first disjunction in $F=(\dplus \dminus p \vee \dminus\dplus p)\to \dminus p)$
    where, in any of \Your choices ($\bfP_{\neg}$ followed by $\bfO_{\vee}$ and 
    $\bfO_{\Diamond^{\pm}}$), \I  win all the elementary states. 
    For $\bbM_2$, \I do not have a winning strategy 
    for the  second conjunct: \I can neither win $\dminus p$ (no $\R^{-}$ successor), 
    nor the first disjunct in  $F$ above since, after $\bfP_{\neg}$, 
    \You choose  ($\bfO_{\vee}$) 
    $\dplus\dminus p$ and select $\c$ and then $\b$ ($\bfO_{\Diamond^{\pm}}$)
    where $p$ holds and \You win. See the complete game in our tool \cite{tool}. 
\end{example}

The following proposition follows from the fact that every rule
in the game decomposes the involved formula in subformulas. 
\begin{proposition}\label{prop:finiteheight}
For all models $\M$ and game states $g$, the game tree of $\mathbf{G}_\M(g)$ is of finite height.
\end{proposition}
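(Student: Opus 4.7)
The plan is to proceed by a straightforward structural induction on the formula $\phi$ appearing in the game state. The key observation is captured by the hint the authors themselves give: every non-terminal rule in Figure~\ref{fig:game-rules} produces a successor game state whose formula is a proper subformula of $\phi$, and the elementary rules $(\mathbf{P}_{el})$ and $(\mathbf{O}_{el})$ immediately terminate the play by declaring a winner. Hence I would define a complexity measure $c(\phi)$ on formulas -- for concreteness, the number of occurrences of connectives and modalities in $\phi$ -- and then prove that along every branch of the game tree this measure strictly decreases from parent to child until an elementary state is reached.

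More precisely, I would carry out the following steps. First, I inspect each of the ten pairs of rules and verify that the child game states produced by the rule involve exactly one of $\phi_1$, $\phi_2$, or the immediate subformula under the leading connective/modality. In each case $c$ strictly drops by at least one. The role labels $\mathbf{P}/\mathbf{O}$ and the agent $\ag$ (or $\b$) change, but these components play no role in the termination argument, since the measure only depends on the formula. Second, I note that if $\phi$ is elementary then $c(\phi)=0$ and the game terminates in a single step via $(\mathbf{P}_{el})$ or $(\mathbf{O}_{el})$. Third, combining the two observations, a routine induction on $c(\phi)$ shows that the height of the subtree rooted at any game state with formula $\phi$ is bounded by $c(\phi)+1$.

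Finally, since every node of the game tree is one of the game states described in Figure~\ref{fig:game-rules}, the bound applies uniformly and the tree of $\mathbf{G}_\M(g)$ has finite height, independently of the (possibly infinite) model $\M$. I do not anticipate a serious obstacle: the only point that requires a moment's care is the $(\mathbf{P}_\neg)/(\mathbf{O}_\neg)$ case, where one might worry that swapping roles could interfere with termination, but this swap does not affect the measure $c$, so the induction goes through unchanged. Note that the tree need not be finitely branching: the modal rules $(\mathbf{P}_{[A]})$, $(\mathbf{O}_{[A]})$, $(\mathbf{P}_{\Diamond^\pm})$ and $(\mathbf{O}_{\Diamond^\pm})$ may offer infinitely many choices when $\A$ is infinite, but this is irrelevant for bounding the \emph{height}.
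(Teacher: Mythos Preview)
Your proposal is correct and follows exactly the approach indicated by the paper, which simply remarks that the proposition ``follows from the fact that every rule in the game decomposes the involved formula in subformulas.'' Your complexity measure $c(\phi)$ is essentially the degree function $\delta(\phi)$ that the paper introduces later (just above Lemma~\ref{lemmamodel}) for the same purpose.
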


\noindent An immediate consequence of this fact is that every semantic game $\mathbf{G}_\M(g)$ is \emph{determined}, i.e., exactly one of the two players has a winning strategy.

\subsection{Strategies and adequacy}\label{sec:adequacy}
Now we are ready to define winning strategies and prove the main result of this section:
the adequacy of the proposed game semantics with respect to the Kripke semantics for \PNL.

\begin{definition}
    A \emph{strategy} for \Me in the game $\mathbf{G}_\M(g)$ is a subtree $\sigma$ of the associated game tree such that: 
 \textbf{(1)} $g\in \sigma$,
 \textbf{(2)} if $h\in \sigma$ is a node labeled ``Y'', then all children of $h$ are in $\sigma$,
 \textbf{(3)} if $h\in \sigma$ is a node labeled ``I'', then exactly one child of $h$ is in $\sigma$.
The strategy $\sigma$ is called \emph{winning} if all leaves in the tree $\sigma$ are labeled ``I''. (Winning) strategies for \You are defined dually.
\end{definition}

Note that every combination of strategies from \Me and \You defines a unique run of the game. Alternatively, we can define a strategy $\sigma$ for \Me to be winning iff the run resulting from $\sigma$ and a strategy for \You ends in a winning game state for \Me.

We can now show the adequacy of the semantic game, \ie, that the existence of winning strategies for Me in a game and truth in a model coincide (proof in Appendix~\ref{app:proofs}).
\begin{theorem}\label{th:adequacy}
Let $\M$ be a \PNL-model, $\ag$ an agent, and $\phi$ a formula.

\noindent\textbf{(1)} \I have a winning strategy for $\mathbf{G}_\M(\mathbf{P}, \ag: \phi)$ iff $\M,\ag \models \phi$. 

\noindent\textbf{(2)} \You have a winning strategy for $\mathbf{G}_{\M}(\mathbf{P}, \ag: \phi)$ iff $\M,\ag\not \models \phi$.
\end{theorem}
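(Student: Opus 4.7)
The plan is to prove (1) by induction on the structure of the formula $\phi$, and to derive (2) from (1) using the determinacy of the game. Determinacy is a consequence of Proposition~\ref{prop:finiteheight}: since the game tree is finite, by a standard backward-induction argument on the tree (assigning to each node the label of the player who wins from it), exactly one of the two players has a winning strategy for $\mathbf{G}_\M(g)$. Consequently, once (1) is established, $\You$ have a winning strategy for $\mathbf{G}_\M(\mathbf{P},\ag:\phi)$ iff $\I$ do not, iff $\M,\ag\not\models\phi$, which is exactly (2).

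To make the induction go through the negation case smoothly, $\I$ would in fact strengthen (1) to the joint statement: for every $\phi$, every $\M$, and every $\ag$, $\I$ have a winning strategy for $\mathbf{G}_\M(\mathbf{P},\ag:\phi)$ iff $\M,\ag\models\phi$, and $\I$ have a winning strategy for $\mathbf{G}_\M(\mathbf{O},\ag:\phi)$ iff $\M,\ag\not\models\phi$. The base case concerns elementary formulas and is immediate from the rules $(\mathbf{P}_{el})$ and $(\mathbf{O}_{el})$. For the inductive step, each connective is treated by comparing the matching semantic clause in Figure~\ref{fig:ksem} with the corresponding pair of game rules: the propositional cases $\wedge$, $\vee$ unfold using the fact that $\You$-labelled nodes require a winning substrategy in \emph{every} child while $\I$-labelled nodes require one in \emph{some} child; the negation case swaps the role of proponent and opponent and is handled by the joint induction hypothesis; and the modal cases $\Diamond^\pm$ and $[A]$ are treated by matching the quantification over successors (resp.\ agents) with the relevant player's choice.

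The main technical point to watch is the asymmetry of the diamond rules when no successor exists: at $\mathbf{P},\ag:\Diamond^\pm\phi$ with no $\R^\pm$-successor, $\You$ win immediately, matching $\M,\ag\not\models\Diamond^\pm\phi$; at $\mathbf{O},\ag:\Diamond^\pm\phi$ in the same situation, $\I$ win immediately, matching $\M,\ag\not\models\Diamond^\pm\phi$ (so $\I$ indeed should win as opponent). Beyond this bookkeeping the argument is routine, and the only other mildly delicate point is the negation case, where it is essential that the two halves of the strengthened statement are being proved simultaneously, so that the induction hypothesis on $\phi$ in the role swapped by $\neg$ is available. No frame-specific properties (reflexivity of $\R^+$, symmetry, non-overlap) are needed in the argument; the adequacy holds uniformly over all models, the frame properties becoming relevant only when one passes from truth in a fixed model to validity over a class of models.
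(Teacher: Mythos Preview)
Your proposal is correct and follows essentially the same approach as the paper: an induction on the formula structure, matching each connective's game rule to its semantic clause. The paper phrases it as proving (1) and (2) simultaneously rather than deriving (2) from determinacy, but this is a cosmetic reorganization of the same argument, and your explicit handling of the negation case (which is what forces the joint $\mathbf{P}/\mathbf{O}$ induction hypothesis) is if anything clearer than the paper's sketch.
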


\paragraph{Internalizing nominals.}\label{sec:intern}
Remember that in a named model $\M$, every agent $\ag$ has a name $i$, i.e. there exists $i\in N$ s.t. $\g(i)=\ag$. Therefore, for $\mathbf{Q} \in \{\mathbf{P},\mathbf{O}\}$, it is unambiguous if we write $\mathbf{Q}, i:\phi$ for the game state $\mathbf{Q},\ag : \phi$. The fact that $\mathbb{M} \models R^{\pm}(i,j)$ iff $(\g(i),\g(j))\in \R^{\pm}$ gives us the following equivalent formulations of the rules for $\dplus$, $\dminus$ and $[A]$\footnote{The outcome of the game state $\mathbf{Q},k:R^{\pm}(i,j)$ is independent of $k$ (it only depends on the underlying model $\M$). Hence, we write 
        $\mathbf{Q},\_:R^\pm(i,j)$ instead of $\mathbf{Q},k:R^\pm(i,j)$.\label{foot:R}}:\\

\vspace{-0.2cm}
\resizebox{.96\textwidth}{!}{
\noindent\begin{minipage}[t]{\textwidth}
\begin{description}
    \item[$(\mathbf{P}_{\Diamond^\pm})$] At $\mathbf{P}, i: \Diamond^\pm \phi$, \I choose a nominal $j$, and \You decide whether the game ends in the state $\mathbf{P},\_:R^\pm(i,j)$ or continues with $\mathbf{P},j: \phi$.\item[$(\mathbf{O}_{\Diamond^\pm})$] \vspace{-2mm} At $\mathbf{O},i: \Diamond^\pm \phi$, \You choose  $j$, and \I choose between $\mathbf{O},\_:R^\pm(i,j)$ and $\mathbf{O},j:\phi$.
\item[$(\mathbf{P}_{[A]})$] At $\mathbf{P},i: [A]\phi$, \You choose a nominal $j$ and the game continues with $\mathbf{P},j:\phi$.
\item[$(\mathbf{O}_{[A]})$] \vspace{-2mm} At $\mathbf{O},i: [A]\phi$, \I choose a nominal $j$, and the game continues with $\mathbf{O},j: \phi$.
\end{description}
\end{minipage}
}\\

\noindent Following Remark~\ref{rem:surjg}, we can restrict branching over a subset $N'$
 of nominals if $g$'s restriction to  $N'$ is surjective. For future reference,
 let us formulate this observation precisely. We denote the game over
 $\mathbb{M}$ starting at the game state $g$, where branching is over
 $N'\subseteq N$ by $\mathbf{G}^{N'}_\M(g)$. We call a model $N'$-named if
 $\g:N'\rightarrow \A$ is surjective. We say that two games $\mathbf{G}_1$ and
 $\mathbf{G}_2$ are \emph{strategically equivalent}, notation $\mathbf{G}_1\cong
 \mathbf{G}_2$, iff \I have a winning strategy in both games or in none of the
 two. We have the following:

\begin{proposition}
    \label{prop:surjg}
Let $\M$ be $N'$-named, let $g=\mathbf{Q},\ag:\phi$ for some $\mathbf{Q}\in
\{\mathbf{P},\mathbf{O}\}$, some agent $\ag$ and formula $\phi$, and let
$\ag=\g(i)$. Then $\mathbf{G}_\M(\mathbf{Q},\ag:\phi)\cong
\mathbf{G}^{N'}_\M(\mathbf{Q},i:\phi)$. 
\end{proposition}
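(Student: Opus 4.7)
The natural plan is a straightforward structural induction on the formula $\phi$, proving simultaneously for both $\mathbf{Q} \in \{\mathbf{P}, \mathbf{O}\}$ that \I have a winning strategy in $\mathbf{G}_\M(\mathbf{Q},\ag:\phi)$ iff \I have one in $\mathbf{G}^{N'}_\M(\mathbf{Q},i:\phi)$. By determinacy (following from Proposition~\ref{prop:finiteheight}), this yields the desired strategic equivalence. Since $\M$ is $N'$-named, the hypothesis $\ag = \g(i)$ for some $i\in N'$ is always available when we recurse.

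The base case handles elementary formulas: the outcome of both $\mathbf{G}_\M(\mathbf{Q},\ag:\phi_e)$ and $\mathbf{G}^{N'}_\M(\mathbf{Q},i:\phi_e)$ is determined solely by whether $\M,\ag\models \phi_e$; the two games terminate in the very same step. For $\phi_e = R^\pm(i',j')$ this is independent of the location (see footnote~\ref{foot:R}), and for propositional atoms $p$ the condition $\ag \in \V(p)$ is the same in both cases. The inductive step for Boolean connectives is immediate, since the rules $\mathbf{Q}_\wedge$, $\mathbf{Q}_\vee$, $\mathbf{Q}_\neg$ only operate on subformulas and preserve the current agent/nominal; applying the induction hypothesis to each successor state closes these cases.

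The interesting cases are the modal ones. For $\mathbf{P}_{\Diamond^\pm}$: in the original game \I win iff there exists an agent $\b$ with $(\ag,\b)\in \R^\pm$ such that \I win at $\mathbf{P},\b:\phi$; in the internalized game \I win iff there exists $j\in N'$ such that both $(\g(i),\g(j))\in \R^\pm$ (so that \I survive \Your possible choice of the state $\mathbf{P},\_:R^\pm(i,j)$) and \I win at $\mathbf{P},j:\phi$. Since $\g|_{N'}$ is surjective, quantifying over agents $\b$ with $(\ag,\b)\in \R^\pm$ is equivalent to quantifying over nominals $j\in N'$ with $(\g(i),\g(j))\in \R^\pm$ via $\b = \g(j)$, and the induction hypothesis equates winning at $\mathbf{P},\b:\phi$ with winning at $\mathbf{P},j:\phi$. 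The dual case $\mathbf{O}_{\Diamond^\pm}$ is symmetric. The rules for $[A]$ involve a universal choice over agents/nominals; surjectivity again makes the two quantifications coextensive, and the IH finishes the case.

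The main obstacle, though minor, is the asymmetry of the internalized $\mathbf{P}_{\Diamond^\pm}$ and $\mathbf{O}_{\Diamond^\pm}$ rules, where the quantifier alternation introduces an extra branching: after \I pick $j$, \You gets to choose between the elementary test $R^\pm(i,j)$ and the subformula $\phi$. One must be careful to observe that this extra move does \emph{not} weaken the game — rather, \Your choice compiles the existence clause and the subformula clause of the Kripke condition into a single conjunction, matching exactly the ``there exists a successor \emph{and}\ldots'' reading used in the original rule. Once this correspondence is spelled out, the induction goes through without further subtlety.
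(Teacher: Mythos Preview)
Your proposal is correct. The paper does not actually give a proof of this proposition: it is stated as an immediate consequence of Remark~\ref{rem:surjg} (that in the alternative truth conditions one may replace $N$ by any $N'$ on which $\g$ is still surjective), together with the equivalence of the original and internalized modal rules established just before the proposition. Your structural induction is precisely the unfolding of that observation, and the way you handle the extra branching in the internalized $\Diamond^\pm$-rules (reading \Your choice as encoding the conjunction ``successor exists \emph{and} $\phi$ holds there'') is exactly the point that makes the two formulations match.
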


\section{The Disjunctive Game}\label{sec:dis-game}

In this section, we lift the semantic game to two different \emph{disjunctive
games}, one for \PNL-validity and one for \cc-\PNL-validity, which differ only in
their respective winning conditions. We prove that the obtained games are
adequate and can thus be regarded as \emph{provability games} for their
corresponding logics. The main crucial fact is that the rules of the semantic
game are independent of the underlying model, except for elementary game
states. Using this fact, and
identifying certain conditions on winning strategies, this disjunctive game
will be the foundation for the sequent calculi proposed in Section \ref{sec:proofs}. 

\subsection{Playing on all models}\label{sec:dgame}
The disjunctive game
$\mathbf{DG}(\mathbf{P},i:\phi)$ can be thought of as \Me and \You playing all
semantic games $\mathbf{G}(\mathbf{P},i:\phi)$ over all \PNL-models $\M$
simultaneously. We point out that  the rules of the semantic game do not depend on the structure
of $\M$ but merely on $\phi$. Truth degrees are only needed at the atomic level
to determine who wins the particular run of the game. This allows us to require
players to play ``blindly'', i.e., without explicitly referencing  a model $\M$.
Clearly, if \I have a winning strategy in such a game, then \I can win in
$\mathbf{G}_\M(\mathbf{P},i:\phi)$, for every $\M$, making this strategy an
adequate witness of logical validity. 

Defining such a validity game is not straightforward, since the simplest case of disjunction
is already problematic.  Let us consider 
the game $\mathbf{P},i: p \vee \neg p$.
Clearly, \I have a winning strategy in the semantic game over every model.
However, there is no uniform way of making a good choice in the first turn: No
matter whether \I choose $\mathbf{P},i: p$ or $\mathbf{P},i: \neg p$, there are
still models where \You win the game eventually. To compensate for this, we
allow \emph{Myself} to create ``backup copies'' and \emph{duplicate} game
states. Formally, disjunctive game states are finite multisets of the game states
defined in Section \ref{sec:game-semantics}. We prefer to write $g_1 \bigvee ... \bigvee g_n$ for
the disjunctive game state $\{g_1,...,g_n\}$, but keep the convenient notation
$g\in D$ if $g$ is in the  multiset $D$. We write $D_1 \bigvee D_2$ for the
multiset sum $D_1+D_2$ and $D\bigvee g$ for $D+\{g\}$. A disjunctive state is
called \emph{elementary} if all its game states are elementary.
We use $\mathbf{DG}(D)$ to denote the disjunctive game starting at $D$,
and we define the \cc-disjunctive game $\mathbf{DG}^\cc(D)$ which is played
over all $\cc$-$\PNL$-models. It differs from $\mathbf{DG}(D)$ only in its
winning conditions (see below).

In our running example, \I duplicate the game state in the first round and the
game continues with the \emph{disjunctive state} $\mathbf{P},i: p \vee \neg
p\bigvee \mathbf{P},i: p \vee \neg p$. Now \I move to $\mathbf{P},i: p$ in the
first subgame and to $\mathbf{P},i:\neg p$ in the second. After a role switch
in the second subgame, the final state is $\mathbf{P},i: p \bigvee
\mathbf{O},i: p$, where \I win regardless of the underlying model.

The following \emph{winning condition} reflects the fact that \My strategy for
the disjunctive game $\mathbf{DG}(\mathbf{P},i:p \vee \neg p)$ was successful.
\begin{definition}\label{def:win}
Let $D^{el}$ denote the disjunctive state consisting of the elementary game
states of $D$. \I win and \You lose at $D$ if for every \PNL-model there is
a game state in $D^{el}$ where \I win the corresponding semantic game. In the \cc-disjunctive game, \I win and \You lose if for every \cc-\PNL-model there is
a game state in $D^{el}$ where \I win the corresponding semantic game.
\end{definition}

In the disjunctive game, \I additionally take the
role of a \emph{scheduler}, deciding which game  is to be played next. We
signal the chosen game state by underlining it as in $\underline{g}$.

\begin{figure}
\resizebox{0.93\textwidth}{!}{
\begin{minipage}[t]{\textwidth}
\hrulefill
\begin{description}
\item[(Dupl)] If no state in $D$ is underlined,
\I can choose a non-elementary $g\in D$ and the game continues with $D\bigvee g$.
\item[(Sched)] If no state in $D=D'\bigvee g$ is underlined,
and $g$ is non-elementary, 
    \I can choose to continue the game  with $D'\bigvee \underline{g}$.
\item[(Move)] If $D=D'\bigvee \underline{g}$ then the player who is to move in
    the semantic game $\mathbf{G}(g)$ at $g$ makes a legal move to the game
    state $g'$ and the game continues with $D' \bigvee g'$.\footnote{For example, if $g$
    is $\mathbf{P},i:\phi_1 \wedge \phi_2$, then \You choose a $k\in \{1,2\}$
    and the game continues with $D'\bigvee \mathbf{P},i:\phi_k$.}
\item[(End)]
    The game ends if there are no non-elementary game states left in $D$, or if
    no game state is underlined and \I win according to
    Definition~\ref{def:win}. Otherwise, \I must move according to
    \textbf{(Dupl)} or \textbf{(Sched)}.
\end{description}
\hrulefill
\end{minipage}
}
\caption{Rules for the disjunctive game\label{fig:dg-rules}.}
\end{figure}

\begin{definition}[Disjunctive game]\label{def:dg}
    The rules of the disjunctive game are in \Cref{fig:dg-rules}.
Infinite runs, and runs that end in elementary disjunctive states where \I do not win according to Definition~\ref{def:win}, are winning for \You and losing for \Me.  \textbf{(Dupl)} is referred to as the \emph{duplication rule} and \textbf{(Sched)} as the \emph{scheduling}, or \emph{underlining rule}.
\end{definition}
\noindent It follows from the Gale-Stewart Theorem~\cite{gale} that every disjunctive game $\mathbf{DG}(D)$ and every \cc-disjunctive game $\mathbf{DG}^\cc(D)$ is determined.

\subsection{Adequacy and best strategies}\label{sec:adq-D}
Now we state the main result of this section, whose proof 
is split into two propositions:  The left-to-right direction in Proposition \ref{disjY} and 
 the right-to-left direction in Proposition \ref{prop:adq-D} below. 

\begin{theorem}
\label{thm:adeq} \I have a winning strategy in $\mathbf{DG}(D)$ (in $\mathbf{DG}^\cc(D)$) iff for every (\cc-) \PNL-model $\mathbb{M}$, there is some $g\in D$ such that \I have a winning strategy in $\mathbf{G}_\mathbb{M}(g)$.
\end{theorem}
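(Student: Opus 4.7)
The proof splits the equivalence into two propositions (\ref{disjY} and \ref{prop:adq-D}), both established by contraposition, and uses \emph{determinacy} throughout: semantic games are determined by their finite game-tree height (Proposition~\ref{prop:finiteheight}), and disjunctive games are determined by the Gale--Stewart Theorem, as already noted. The arguments for $\DG(D)$ and $\DG^{\cc}(D)$ are parallel, differing only in the class of models over which Definition~\ref{def:win} is quantified.

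For the left-to-right direction (Proposition~\ref{disjY}), suppose for contradiction that there is a (\cc-)\PNL-model $\M$ such that, for every $g \in D$, \You have a winning strategy $\tau_g$ in $\mathbf{G}_\M(g)$ (available by semantic determinacy). \I shall combine the family $\{\tau_g\}_{g \in D}$ into a winning strategy for \You in $\DG(D)$. The idea is to attach to every live state $g'$ in the current disjunctive state a record of the original $g \in D$ from which $g'$ descends through duplications and moves, together with the residual of $\tau_g$ rooted at $g'$; \textbf{(Dupl)} copies the record, \textbf{(Sched)} leaves it alone, and each application of \textbf{(Move)} updates the residual along the chosen move, using $\tau_g$ to pick \Your response whenever it is \Your turn. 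As every $\tau_g$ is winning, every surviving elementary state in the final disjunctive state is losing for \Me at $\M$; hence \My winning condition fails at $\M$, contradicting the assumption that \I win $\DG(D)$.

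The harder right-to-left direction (Proposition~\ref{prop:adq-D}) also proceeds by contraposition: from a winning strategy $\tau$ for \You in $\DG(D)$ \I construct a single named (\cc-)\PNL-model $\M$ such that, for every $g \in D$, \You win $\mathbf{G}_\M(g)$. \I play $\DG(D)$ against $\tau$ using \textbf{(Dupl)} aggressively, so that every original $g \in D$ is preserved and every branching choice in the underlying semantic rules is explored in a parallel copy; by Proposition~\ref{prop:finiteheight} each line of exploration reduces formula complexity and terminates at an elementary disjunctive state, infinite plays being losing for \Me and hence avoidable by scheduling once all necessary copies are in place. The leaves of the resulting play tree jointly prescribe, via their elementary states and \Your winning condition, truth values for atomic propositions and for facts $R^{\pm}(i,j)$ at the nominals that appear in the tree; these prescriptions define the valuation $V$ and the relations $R^{+}, R^{-}$ of $\M$ on the finite set of nominals so encountered, and Proposition~\ref{prop:surjg} lets us view $\M$ as a named model for the purposes of the semantic game. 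Restricting $\tau$ to the subplay rooted at each $g \in D$ then yields the required winning strategy for \You in $\mathbf{G}_\M(g)$.

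The main obstacle is precisely this right-to-left step, at two points. First, the atomic and relational requirements harvested from different leaves of the play tree must be jointly satisfiable: this is secured because each leaf imposes its atomic demands unilaterally (a surviving $\bfO,i:p$ requires $p$ true at $i$ while $\bfP,i:p$ requires $p$ false, but $\tau$ being winning prevents both configurations from coexisting coherently across branches). Second, $\M$ must satisfy the frame conditions of (\cc-)\PNL---symmetry and non-overlap of $R^{\pm}$, reflexivity of $R^{+}$, and, for the \cc-variant, collective connectedness---which are imposed by construction: we add $R^{+}(i,j)$ precisely when $R^{+}(j,i)$ is added, always include $R^{+}(i,i)$, commit to only one sign per unordered pair, and, in the \cc-case, assign a sign to every pair left unconstrained by the leaves in a way that creates no win for \Me. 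The potentially unbounded branching induced by \My choice of nominal in $(\bfP_{\Diamond^\pm})$ and $(\bfO_{[A]})$ is absorbed into this construction, since only the nominals that actually appear in the play tree need to be interpreted in $\M$.
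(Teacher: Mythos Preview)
Your left-to-right direction is a valid alternative to the paper's: where the paper proves Proposition~\ref{disjY} \emph{directly}, taking My winning strategy $\sigma$ in $\DG(D)$ and extracting by bottom-up tree induction on $\sigma$ a winning strategy for Me in some $\mathbf{G}_\M(g)$, you argue the contrapositive by assembling Your semantic-game strategies $\tau_g$ into a strategy for You in $\DG(D)$ via residual-tracking. That route works.

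The right-to-left direction has the correct contrapositive shape but two real gaps. First, your termination claim fails: at ``I''-labeled modal states ($\bfP,i:\Diamond^\pm\phi$ and $\bfO,i:[A]\phi$ in the internalised formulation) there are infinitely many children, one per nominal, so ``exploring every branching choice in a parallel copy'' cannot be completed in finitely many \textbf{(Dupl)} steps, and Proposition~\ref{prop:finiteheight} (which bounds the height of a single semantic-game tree) does not bound the length of a $\DG$-run with unbounded duplication. The paper does not try to terminate; it fixes a \emph{fair} strategy $\sigma$ for Me (based on an enumeration in which every pair $(g,h)$ recurs infinitely often), plays it against Your winning strategy, and works with the resulting run $\pi$, which is typically infinite. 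Fairness yields the closure property (Lemma~\ref{lem:closure}): every ``I''-state appearing in $\pi$ has all its children in $\pi$, and every ``Y''-state has at least one.

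Second, ``restricting $\tau$ to the subplay rooted at each $g\in D$'' does not yield a strategy in $\mathbf{G}_\M(g)$: $\tau$ is a $\DG$-strategy whose moves at a component may depend on the entire disjunctive state (and on My scheduling and duplication history), so there is no well-defined projection to a single semantic game. The paper instead defines $\M_\pi$ from the elementary states along $\pi$ (Definition~\ref{emepsilon}) and proves by a separate induction on formula degree (Lemma~\ref{lemmamodel}) that You win $\mathbf{G}_{\M_\pi}(g)$ for every $g$ appearing in $\pi$; the inductive step uses precisely the closure from Lemma~\ref{lem:closure}, and the base case handles your consistency concern (if both $\bfP,i:p$ and $\bfO,i:p$ appeared, \I would already win, contradicting that $\tau$ is winning). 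Your remarks on frame conditions match the paper's Definition~\ref{emepsilon}, but without the fair run and the degree induction the argument does not go through.
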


\begin{corollary}
The formula $\phi$ is (\cc-) \PNL-valid iff \I have a winning strategy in $\mathbf{DG}(\mathbf{P},i:[A]\phi)$ (in $\mathbf{DG}^\cc(\mathbf{P},i:[A]\phi)$).
\end{corollary}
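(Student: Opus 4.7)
The plan is to derive this corollary directly from Theorem \ref{thm:adeq} applied to the singleton disjunctive state $D = \{\mathbf{P}, i:[A]\phi\}$, combined with the semantic adequacy of Theorem \ref{th:adequacy} and the observation that $[A]$ syntactically internalizes the notion of truth in a model.

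First \I would chain three equivalences. By the Kripke clause for $[A]$ in \Cref{fig:ksem}, for any model $\M$ and any agent $\ag$, $\M, \ag \Vdash [A]\phi$ iff $\M, \b \Vdash \phi$ for all $\b \in \A$, which is by definition $\M \Vdash \phi$. Hence the truth of $[A]\phi$ at any chosen base point, in particular at $\g(i)$, is equivalent to $\M \Vdash \phi$, and thus $\phi$ is \PNL-valid iff $\M, \g(i) \Vdash [A]\phi$ holds for every \PNL-model $\M$.

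Next, Theorem \ref{th:adequacy}(1) turns $\M, \g(i) \Vdash [A]\phi$ into the statement that \I have a winning strategy in $\mathbf{G}_\M(\mathbf{P}, \g(i):[A]\phi)$, that is, in $\mathbf{G}_\M(\mathbf{P}, i:[A]\phi)$ under the nominal-internalization convention of Section \ref{sec:intern}. Finally, Theorem \ref{thm:adeq} instantiated at $D = \{\mathbf{P}, i:[A]\phi\}$ rephrases ``for every \PNL-model $\M$, \I have a winning strategy in $\mathbf{G}_\M(\mathbf{P}, i:[A]\phi)$'' as ``\I have a winning strategy in $\mathbf{DG}(\mathbf{P}, i:[A]\phi)$''. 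Chaining the three equivalences settles the \PNL-case; the \cc-case is proved identically, quantifying instead over $\ccpnlmodels$ and using the $\cc$-variants of both theorems.

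The only subtlety worth flagging is that the game state $\mathbf{P}, i:[A]\phi$ must be interpretable in every model in the quantification: this is handled by the totality of $\g : N \to \A$, and, if one prefers to work purely with the internalized rules, by first restricting the quantification to named models via the lemma preceding \Cref{rem:surjg}. \I do not foresee any genuine obstacle, since all the heavy lifting is concentrated in Theorem \ref{thm:adeq}.
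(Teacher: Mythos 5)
Your proposal is correct and follows exactly the route the paper intends: the corollary is left unproved there precisely because it is the immediate chaining of the Kripke clause for $[A]$, Theorem~\ref{th:adequacy}, and Theorem~\ref{thm:adeq} instantiated at the singleton disjunctive state $\{\mathbf{P},i:[A]\phi\}$. Your remark about restricting to named models via the lemma preceding Remark~\ref{rem:surjg} is the right way to justify the nominal-internalized reading of the game state, so nothing is missing.
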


\begin{proposition}\label{disjY}
Let $\M$ be a (\cc-) \PNL-model. If \I have a winning strategy in $\mathbf{DG}(D)$ (in $(\mathbf{DG}^\cc(D)$), then there is some $g\in D$ such that \I have a winning strategy in $\mathbf{G}_\M(g)$.
\end{proposition}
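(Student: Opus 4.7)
The plan is to prove this by induction on the height of my winning strategy tree in $\mathbf{DG}(D)$ (respectively $\mathbf{DG}^{\cc}(D)$), fixing the model $\M$ once and for all at the outset. Since my winning strategy contains only finite plays (infinite runs are losing for me by Definition~\ref{def:dg}) and the branching at \Your nodes is finite, K\"onig's lemma ensures the strategy tree is finite, so induction on its height is well-founded.

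For the base case, the strategy terminates at $D$ itself, so $D$ is elementary and $D$ is a winning state for me. By Definition~\ref{def:win}, for the fixed $\M$ there is some $g\in D^{el}=D$ such that $\M\models g$ (in the appropriate sense for $\mathbf{P}$/$\mathbf{O}$), and then the singleton game $\mathbf{G}_\M(g)$ is trivially won by me via rule $(\mathbf{P}_{el})$ or $(\mathbf{O}_{el})$. For $\mathbf{DG}^{\cc}$, the winning condition quantifies only over \cc-\PNL-models, which is exactly the class $\M$ belongs to, so the same argument applies.

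For the inductive step, I inspect the first move prescribed by my strategy. If it is $(\mathbf{Dupl})$ producing $D\bigvee g$ or $(\mathbf{Sched})$ producing $D'\bigvee\underline{g}$ (with $D=D'\bigvee g$), the induction hypothesis yields some element of $D\bigvee g$, resp.\ $D'\bigvee\underline{g}$, in which I win the semantic game over $\M$; in both cases this element already belongs to $D$ (underlining is just a marker). The interesting case is $(\mathbf{Move})$ at $D=D'\bigvee\underline{g}$. If the move is mine, I pick some $g'$ and the induction hypothesis on $D'\bigvee g'$ returns either an $h\in D'\subseteq D$ (done) or $h=g'$; in the latter subcase, prefixing the winning strategy in $\mathbf{G}_\M(g')$ by \My actual move $g\rightsquigarrow g'$ (which is legal in $\mathbf{G}_\M(g)$ because the rules of the disjunctive game faithfully copy those of the semantic game) yields a winning strategy for me in $\mathbf{G}_\M(g)$, and $g\in D$.

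The main obstacle is the case where the move at $\underline{g}$ belongs to \You. Here I must produce a single witness $h\in D$ from a family of witnesses, one per \Your choice $g'$. My plan is a case split: if for \emph{some} \Your choice $g'$ the induction hypothesis returns a witness $h_{g'}\in D'$, this $h_{g'}$ already lies in $D$ and is the desired witness. Otherwise, for \emph{every} \Your choice $g'$ the induction hypothesis yields a winning strategy in $\mathbf{G}_\M(g')$; assembling these per-choice strategies and prepending the \You-node at $g$ produces a winning strategy for me in $\mathbf{G}_\M(g)$, and again $g\in D$. This exhausts all cases, and the \cc~version of the claim goes through by the same argument restricted to \cc-\PNL-models.
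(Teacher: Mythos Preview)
Your case analysis in the inductive step is essentially identical to the paper's proof (bottom-up tree induction on the winning strategy $\sigma$, with the same split on \textbf{(Dupl)}/\textbf{(Sched)}/\textbf{(Move)} and the same ``some child lands in $D'$ vs.\ all children land in $g'$'' dichotomy at \Your nodes). So the heart of the argument is right and matches the paper.

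There is, however, a genuine gap in your setup. You justify induction on the \emph{height} of the strategy tree by K\"onig's lemma, invoking that ``the branching at \Your nodes is finite''. That claim is false in the disjunctive game: at states of the form $\mathbf{P},i:[A]\phi$ and $\mathbf{O},i:\Diamond^\pm\phi$ the move is \Yours and ranges over \emph{all} nominals, which is an infinite set (this is exactly the infinite branching that Section~\ref{sect:opt} and Proposition~\ref{prop:bestchoice} are designed to tame, but that taming happens \emph{after} the present proposition). With infinite branching at \Your nodes, K\"onig's lemma does not apply, and a winning strategy need not have finite height even though every individual play is finite. Consequently ``induction on height'' is not well-founded as stated.

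The fix is exactly what the paper does: replace induction on height by \emph{bottom-up tree induction} (equivalently, well-founded induction on the strategy tree ordered by the child relation). This is legitimate because every path in a winning strategy is finite (infinite runs lose for \Me), so the tree is well-founded regardless of branching width. Your inductive step then goes through verbatim.
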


\begin{proof}
Let $\sigma$ be \My winning strategy in $\mathbf{DG}(D)$. We show the following claim by bottom-up tree induction on $\sigma$: For every $H\in \sigma$ there is some $g\in H$ and a winning strategy $\sigma_H$ for \Me in $\mathbf{G}_\M(g)$. The case for the root, $H=D$, gives the desired result.

If $H$ is a leaf, then, since $\sigma$ is winning, there is an elementary $h\in H$ such that \I win the semantic game at $h$ over $\M$. In this case, $\sigma_H$ consists of the single leaf $h$.

If \You move in $H$, then it must be of the form $H'\bigvee \underline{h}$,
where $h$ is labeled ``Y'' in the semantic game. By  definition of strategy,
all successors of $H$ must be in $\sigma$, and they are of the form $H'\bigvee
h'$, where $h'$ ranges over all possible game states immediately after \Your
choice at $h$. By inductive hypothesis, for every $h'$,
$\sigma_{H'\bigvee h'}$ is a winning strategy for some $g\in H' \bigvee h'$. If
for some $h'$, $\sigma_{H'\bigvee h'}$ is a winning strategy for some $g\in
H'$, then we can set $\sigma_H=\sigma_{H'\bigvee h'}$. Otherwise, all
$\sigma_{H'\bigvee h'}$ is a winning strategy for $h'$. Hence, we can connect
the roots of all these trees to the new common root $h$ to obtain a winning
strategy for \Me in $\mathbf{G}_\M(h)$.

If \I move in $H$ according to the rules 
\textbf{(Dupl)} or \textbf{(Sched)}, then the resulting disjunctive state is still $H$ (except maybe some game state could be underlined or duplicated). Hence, the claim follows from the inductive hypothesis.
If \I move in $H$ according to \textbf{(Move)}, then $H=H'\bigvee
\underline{h}$, and the unique child of $H$ in $\sigma$ is $H'\bigvee h'$,
where $h'$ is a possible game state after \My move in the semantic game at $h$.
By the inductive hypothesis, $\sigma_{H'\bigvee h'}$ is a winning strategy for
\Me in $\mathbf{G}_M(g)$, for some $g\in H'\bigvee h'$. If $g\in H'$, we
proceed as above. If $g=h'$, then we can append the root of $\sigma_{H' \bigvee
h'}$ to $h$ to obtain a winning strategy for \Me in $\mathbf{G}_\M(h)$.
\end{proof}

\paragraph{My best way to play.}
We will now describe a strategy $\sigma$ for \Me for the game
$\mathbf{DG}(D_0)$. This strategy is -- in a way -- the optimal way to play the
disjunctive game. Intuitively $\sigma$ exploits all of \My possible choices
without sacrificing \My winning chances. Consequently, $\sigma$ is winning iff
\I can win the game at all. (This claim
follows from Proposition~\ref{disjY} by classical reasoning). 

Let us fix an enumeration of pairs $(g,h)$ of game states of the semantic game such that every pair appears in this enumeration infinitely often. Let us denote by $\#(g,h)$ the number of the pair $(g,h)$ under this enumeration. Throughout the game, let us keep track of the number of execution steps $n$ of $\sigma$. At $D_0$, $n=0$. The strategy $\sigma$ is as follows:

\begin{enumerate}
    \item[(C1)] If in the current disjunctive state $D$, $D^{el}$ is winning, \I end the game.
    \item[(C2)] Otherwise, let $n=\#(g,h)$. If $D=D'\bigvee g$, according to the label of  $g$ we have: 
    \\
    \noindent(a)  ``Y'' (otherwise, skip), then underline $g$ and \You make \Your move. \\
    \noindent (b)  ``I'' and $h$ is a child of $g$ in the evaluation game (otherwise, skip), then duplicate $g$, schedule a copy of $g$, and go to $h$ in that copy, i.e., the new disjunctive state is $D'\bigvee g \bigvee h$.
\item[(C3)] Increase $n$ by 1 and go to (C1).
\end{enumerate}

In words, until the game reaches a winning  state, \My
strategy is to play in a way such that \I always duplicate a state, and then play
by exhausting all possible moves in that state. 

Let $\pi$ be the run of the game
$\mathbf{DG}(D_0)$ resulting from \You playing according to \Your winning
strategy and \Me playing \My best way $\sigma$. We say that a game state $g$
\emph{appears along} $\pi$, and write $g\in \pi$, if it occurs in a disjunctive
state in $\pi$. We say that $g$ \emph{disappears}, if $g\in \pi_n$ and for some
$m> n$, $g\notin \pi_m$. The following holds (proof in Appendix~\ref{app:proofs}).

\begin{lemma}\label{lem:closure}
Let $\pi$ be as above. Then:\\
1) Let $g\in \pi$ be a non-elementary game state labelled ``Y'' in the semantic game. Then at least one successor of $g$ appears along $\pi$.\\
2) Let $g\in \pi$ be a non-elementary game state labeled ``I'' in the semantic game. Then all successors of $g$ appear along $\pi$.
\end{lemma}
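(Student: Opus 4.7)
The argument rests on exploiting the fairness of the enumeration underlying $\sigma$---every pair $(g,h)$ appears with $\#(g,h)=n$ for infinitely many $n$---together with a careful bookkeeping of which actions under $\sigma$ can cause a game state to leave the current disjunctive state. The central structural observation I would establish first is: among the prescribed actions of $\sigma$ (the duplication-plus-scheduling in (C2)(b), the scheduling in (C2)(a), and the resulting Move by \You), only (C2)(a) followed by \Your Move ever removes a game state from $D$; the duplication step only adds, and the Move rule only affects the scheduled (underlined) state. Since (C2)(a) fires only on Y-labelled states, in particular no I-labelled state is ever removed once it enters $\pi$.

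For part (2), let $g\in\pi$ be non-elementary and I-labelled, first appearing at step $n_0$. By the observation above, $g\in\pi_n$ for every $n\geq n_0$, and so $\pi$ must be infinite: the end condition of Definition~\ref{def:dg} forbids termination while the non-elementary state $g$ remains in $D$. For any successor $h$ of $g$ in the semantic game, fairness yields some $n_1\geq n_0$ with $n_1=\#(g,h)$; at that step $g\in\pi_{n_1}$ and $h$ is a child of $g$, so (C2)(b) fires and $h$ enters the disjunctive state, giving $h\in\pi_{n_1+1}$.

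For part (1), let $g\in\pi$ be non-elementary and Y-labelled, first appearing at step $n_0$. The proof splits into two cases. If (C2)(a) ever fires on some copy of $g$ at step $n$, then by the Move rule \Your response replaces the underlined copy by a legal successor $g'$ of $g$ in the semantic game, so $g'\in\pi_{n+1}$. Otherwise, no copy of $g$ is ever removed, so $g\in\pi_n$ for all $n\geq n_0$ and $\pi$ is infinite; by fairness some $n_1\geq n_0$ satisfies $n_1=\#(g,h)$ for some $h$, at which point $\sigma$ is forced to fire (C2)(a) on $g$, contradicting the case assumption. Hence a successor of $g$ must appear in $\pi$.

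The main obstacle is the structural observation and its correct formulation in the presence of multiple copies of the same game state, which duplication can create. I would frame the argument throughout in terms of ``some copy of $g$ is in $\pi_n$'' rather than tracking individual copies, which is sound because $\sigma$'s decomposition $D=D'\bigvee g$ can select any available copy. An accompanying minor check, that the game cannot terminate while a non-elementary $g$ remains in $D$, follows directly from the end clause of Definition~\ref{def:dg}.
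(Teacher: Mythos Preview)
Your proposal is essentially correct and follows the same route as the paper: both exploit the fairness of the enumeration together with the persistence of $I$-labelled states under $\sigma$. Your case split in part~(1) is just a more explicit repackaging of the paper's direct argument, which picks the \emph{minimal} $m\geq n$ with $m=\#(g,h)$ and uses minimality (tacitly) to ensure $g$ has not yet been scheduled away by the time step $m$ is reached.

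There is one small gap. You claim that ``the end condition of Definition~\ref{def:dg} forbids termination while the non-elementary state $g$ remains in $D$'' and treat this as a minor check following from the End clause alone. But the End clause also permits termination whenever $D^{el}$ is winning for \Me, regardless of whether non-elementary states remain. What rules this out is the standing hypothesis on $\pi$: it arises from \You playing \Your winning strategy, so no disjunctive state along $\pi$ can have a winning elementary part for \Me, and hence (C1) never fires. The paper makes exactly this observation at the outset of its proof. Once you add this one line, your argument goes through; without it, your claims that $\pi$ is infinite (in part~(2) and in the second case of part~(1)) are not justified.
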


We can now show that $\pi$ gives rise to a model $\M_\pi$ with the property that \You have a winning strategy for every $g$ appearing along $\pi$. 

\begin{definition}\label{emepsilon}
Let $\mathcal{E}$ be a  set of elementary game states. Let $\M_\mathcal{E}$ be the following named model:
\noindent - \textbf{Agents} $\A$: an agent $\ag_i$  for each nominal $i$ appearing in $\mathcal{E}$;

\noindent - \textbf{Accessibility relations}: $\R^-_\mathcal{E}$ is the least symmetric relation such that $\ag_i\R_\mathcal{E}^- \ag_j$ whenever (i)\textsuperscript{+} $\mathbf{O}, \_: R^- (i,j)$ is in $\mathcal{E}$. $\R^+_\mathcal{E}$ is the least reflexive symmetric relation such that (i)\textsuperscript{-} $\ag_i\R_\mathcal{E}^+ \ag_j$ whenever $\mathbf{O}, \_: R^+ (i,j)$ is in $\mathcal{E}$;

\noindent - \textbf{Valuation function} $\V_\mathcal{E}$: $ \ag_i \in \V_\mathcal{E}(p)$ iff the state $\mathbf{O}, i: p$ is in $\mathcal{E}$;

\noindent - \textbf{Assignment} $\g_\mathcal{E}$: $\g_\mathcal{E}(i)=\ag_i$.

The model $\M_\mathcal{E}^\cc$ is as $\M_\mathcal{E}$, except  $\R^+$ also has (ii)\textsuperscript{+} $\ag_i\R^+\ag_j$ if $\mathbf{P},\_:R^-(i,j)\in \mathcal{E}$ or if (iii)\textsuperscript{+} no $\mathbf{Q},\_:R^\pm(i,j)$ for $\mathbf{Q}\in \{\mathbf{P},\mathbf{O}\}$ is in $\mathcal{E}$, and is closed under reflexivity and symmetry, and $\R^-$ also has (ii)\textsuperscript{-} $\ag_iR^-\ag_j$ if $\mathbf{P},\_:R^+(i,j)\in \mathcal{E}$ and is closed under symmetry.
\end{definition}

The degree $\delta(\phi)$ of a formula $\phi$ to be 0 if $\phi$ is elementary,
$\delta(\phi_1 \star \phi_2) = \max\{\delta(\phi_1), \delta(\phi_2)\} + 1$ for
$\star$ a binary and $\delta(\triangle \psi) = \delta(\psi) + 1$ for
$\triangle$ a unary operator. We extend $\delta$ to game states by setting
$\delta(\mathbf{Q}, i : \phi ) = \delta(\phi)$ for $\mathbf{Q} \in \{\mathbf{P}, \mathbf{O}\}$.

\begin{lemma}\label{lemmamodel}
Let $\M_\pi$ be the model $\M_\mathcal{E}$ $(\M_\mathcal{E}^\cc$) from the above definition, where $\mathcal{E}$ is the set of all elementary game states appearing along $\pi$. This model is a (\cc-) \PNL~model. Furthermore, if $g$ appears along $\pi$, then \You have a winning strategy for $\mathbf{G}_{\M_\pi}(g)$.
\end{lemma}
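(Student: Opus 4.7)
The plan is to establish the two assertions of the lemma separately, with the global observation that since $\pi$ is produced by \Me playing the best-way strategy $\sigma$ against \Your winning strategy, the termination clause (C1) is never triggered along $\pi$: at no disjunctive state reached by $\pi$ is $D^{el}$ winning for \Me in the sense of Definition~\ref{def:win}. All ``forbidden combinations'' of elementary game states that would make \Me win in every (\cc-)\PNL-model are therefore ruled out from ever co-occurring in $\mathcal{E}$.

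For the first assertion, that $\M_\pi$ is a (\cc-)\PNL-model, reflexivity of $\R^+$ and symmetry of $\R^+,\R^-$ are immediate from Definition~\ref{emepsilon}. The substantive check is non-overlap. Assuming $(\ag_i,\ag_j)\in\R^+_\pi\cap\R^-_\pi$ and tracing the pair back to its generating elementary states in $\mathcal{E}$, the plain \PNL~case offers only two possibilities (up to symmetry): either $i=j$ together with $\mathbf{O},\_:R^-(i,i)\in\mathcal{E}$, or both $\mathbf{O},\_:R^+(i,j)$ and $\mathbf{O},\_:R^-(i,j)$ in $\mathcal{E}$. Each such combination is winning for \Me in every \PNL-model (using reflexivity of $\R^+$ and disjointness of $\R^+,\R^-$ at the model-class level) and hence excluded by (C1). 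For the \cc~variant, the extra clauses (ii)\textsuperscript{$\pm$} and (iii)\textsuperscript{+} create additional origins for edges, so more combinations (e.g.\ one $\mathbf{O}$-state paired with a $\mathbf{P}$-state of opposite colour, or two $\mathbf{P}$-states of opposite colour) must be ruled out; each is again excluded by (C1), now additionally using collective connectedness of the \cc~class. Collective connectedness of $\M_\pi^\cc$ itself is immediate from clause (iii)\textsuperscript{+}, which places every ``uncovered'' pair into $\R^+$.

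For the second assertion I would induct on the degree $\delta(g)$. In the base case $g$ is elementary; inspecting each form $\mathbf{Q},i:p$ and $\mathbf{Q},\_:R^\pm(i,j)$, the construction of $\M_\pi$ assigns exactly the truth value that makes $g$ losing for \Me, the only obstruction being that an ``opposite'' $\mathbf{P}/\mathbf{O}$-copy of the same elementary formula might sit in $\mathcal{E}$, which (C1) excludes. For the inductive step I invoke Lemma~\ref{lem:closure}: if $g$ is a ``Y''-labelled node of the semantic game, at least one successor $g'$ lies in $\pi$, and \Your winning strategy is to move to that $g'$ and then apply the inductive hypothesis; if $g$ is ``I''-labelled, all successors lie in $\pi$, so the inductive hypothesis supplies a winning continuation for \You regardless of which move \I make. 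The modal moves whose semantic-game step is a combined I-then-Y choice (as for $\dplusminus$ in the internalised rules) are absorbed by iterating the dispatch across the intermediate tree node: Lemma~\ref{lem:closure}(2) delivers the intermediate node for every nominal \I could pick, and Lemma~\ref{lem:closure}(1) delivers one of its two children, to which the inductive hypothesis then applies.

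The principal obstacle is the non-overlap verification for $\M_\pi^\cc$: clauses (ii)\textsuperscript{$\pm$} and (iii)\textsuperscript{+} admit many ways in which a given edge can land in $\R^\pm$, so the case analysis is noticeably longer, and each forbidden configuration needs to be converted into a state where $D^{el}$ would already have been winning for \Me via a careful argument that simultaneously leverages disjointness \emph{and} collective connectedness at the model-class level.
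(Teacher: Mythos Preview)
Your proposal is correct and follows essentially the same route as the paper: non-overlap is checked by tracing an overlapping pair back to its generating elementary states and observing that the resulting combination would have triggered a win for \Me, while the second claim is handled by induction on degree with Lemma~\ref{lem:closure} supplying the successors in the inductive step. One small point you leave implicit and the paper spells out: membership in $\mathcal{E}$ is accumulated over all of $\pi$, whereas (C1) speaks about a single $D^{el}$; the bridge is that elementary game states, once they appear, never disappear (they are neither schedulable nor movable), so any two members of $\mathcal{E}$ eventually coexist in some $D^{el}$ along $\pi$.
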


\begin{proof}
To prove that $\M_\mathcal{E}$ is a \PNL-model it remains to show that it is non-overlapping. Towards a contradiction, assume that $(\ag_i,\ag_j)\in \R_\pi^+\cap \R_\pi^-$. By definition, both $\mathbf{O},\_:R^+(i,j)$ and $\mathbf{O},\_:R^-(i,j)$ appear\footnote{Since the relations are symmetric, we can identify $R^\pm(i,j)$ with $R^\pm(j,i)$.} along $\pi$. Since the elementary states of $\pi$
are accumulative, there is a disjunctive state $D$ in $\pi$ containing both of these states. But then \I could have won the game in $D$, since in every \PNL-model at least one of $R^+(i,j)$ and $R^-(i,j)$ must be false. This is a contradiction to the assumption that $\pi$ results from \You playing \Your winning strategy.

We show that $\M_\mathcal{E}^\cc$ is a \cc-\PNL-model. First, we note that this model is collectively connected, since, $(iii)^+$ (see \Cref{emepsilon}) is equivalent to $\neg((i)^+\vee (ii)^+\vee (i)^-\vee (ii)^-)$. Also, the model remains non-overlapping. Suppose, we had $(\ag_i,\ag_j)\in \R^+\cap \R^-$.  According to the definition of $\R^+$ and $\R^-$ all possible scenarios how this could happen lead to contradiction:

\noindent - \textbf{(1)} $(i)^+$ and $(i)^-$: At some point, both states appear in a disjunctive state in $D$. But \I would have won the game at $D$, since there is no non-overlapping model where both $R^+(i,j)$ and $R^-(i,j)$ are true.

\noindent - \textbf{(2)} $(i)^+$ and $(ii)^-$: \I win the game in a disjunctive state, where both $\mathbf{O},\_:R^+(i,j),\mathbf{P},\_:R^+(i,j)\in \mathcal{E}$ states occur.

\noindent - \textbf{(3)} $(ii)^-$ and $(i)^-$: Similar to the case above.

\noindent - \textbf{(4)} $(ii)^-$ and $(ii)^-$: \I would have won in a state containing $\mathbf{P},\_:R^+(i,j)$ and $\mathbf{P},\_:R^-(i,j)$, since there is no collectively connected models where both $R^+(i,j)$ and $R^-(i,j)$ are false.

\noindent - \textbf{(5)} $(iii)^+$, excludes $(i)^-$ and $(ii)^-$, hence $(\ag_i,\ag_j)\notin \R^-$.

We prove the second claim by induction on the degree of $g$. The elementary cases
where $g$ is of the form $\mathbf{O}, i: \phi$ follow directly from the
definition of $\M_\pi$. Assume $g= \mathbf{P}, i: p$ appears along $\pi$, but
$\M_\pi,\ag_i\Vdash p$. The latter implies that 
$\mathbf{O}, i: p$ appears along $\pi$. Reasoning as above, there is a disjunctive state $D$ in $\pi$ containing both $\mathbf{P}, i: p$ and $\mathbf{O}, i: p$, which means that $D$ would be winning for \Me. The case for $\mathbf{P}, \_: R^\pm(i,j)$ is similar.

For the inductive step, let $g\in \pi$ be non-elementary with the label ``Y''.
By Lemma~\ref{lem:closure}, some child $h$ of $g$ appears along $\pi$. By the
inductive hypothesis, there is a winning strategy $\mu_{h}$ for \You for
$\mathbf{G}_{\M_\pi}(h)$. Hence, appending the root of $\mu_h$ to $g$ gives a
winning strategy for \You in $\mu_g$ in $\mathbf{G}_{\M_\pi}(g)$.
If $g$ is non-elementary with label ``I'', then, by Lemma~\ref{lem:closure},
all children $h$ of $g$ appear along $\pi$. For each $h$ there is a winning
strategy for \You in $\mathbf{G}_{\M_\pi}(h)$. Thus, appending the roots of all
$\mu_h$ to the new common root $g$ gives a winning strategy for \You in
$\mathbf{G}_{\M_{\pi}}(g)$. 
\end{proof}

\begin{proposition}\label{prop:adq-D}
    Assume that 
 for every model $\mathbb{M}$, there is some $g\in D$ such that \I have a winning strategy in $\mathbf{G}_\mathbb{M}(g)$.
 Then, 
 \I have a winning strategy in $\mathbf{DG}(D)$.
\end{proposition}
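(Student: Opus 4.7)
The plan is to prove the contrapositive, exploiting the determinacy of the disjunctive game together with the model-construction machinery already assembled in \Cref{sec:adq-D}. Suppose \I lack a winning strategy in $\mathbf{DG}(D)$ (respectively, in $\mathbf{DG}^\cc(D)$). Since the disjunctive game is determined by the Gale--Stewart theorem, \You have a winning strategy $\tau$.

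Next, \I would let \Me play the ``best way'' strategy $\sigma$ described immediately before \Cref{lem:closure}, and let \You respond with $\tau$. The resulting run $\pi$ is precisely of the form analysed in \Cref{lem:closure} and \Cref{lemmamodel}: it is an infinite (or, if it ends, non-winning for \Me) run produced by $\sigma$. Applying \Cref{lemmamodel} to $\pi$ yields a (\cc-)\PNL-model $\M_\pi$ (respectively $\M_\pi^\cc$) such that, for every game state $g$ appearing along $\pi$, \You have a winning strategy in $\mathbf{G}_{\M_\pi}(g)$.

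To conclude, \I observe that every $g\in D$ appears along $\pi$, since $D$ itself is the root disjunctive state of $\pi$. Hence, for each $g\in D$, \You have a winning strategy in $\mathbf{G}_{\M_\pi}(g)$. By the adequacy of the semantic game (\Cref{th:adequacy}), this means that \I have no winning strategy in $\mathbf{G}_{\M_\pi}(g)$ for any $g\in D$. This directly contradicts the hypothesis that for every (\cc-)\PNL-model there is some $g\in D$ in which \I have a winning strategy. The argument for $\mathbf{DG}^\cc(D)$ is identical, with $\M_\pi^\cc$ replacing $\M_\pi$, relying on the \cc\ clauses of \Cref{emepsilon} and \Cref{lemmamodel} to ensure collective connectedness.

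The main obstacle is not conceptual but bookkeeping: one must check that \My ``best way'' strategy $\sigma$ is indeed well-defined in every disjunctive state that can arise against $\tau$ (so that the induced run $\pi$ exists and is of the shape required by \Cref{lem:closure}), and that $\pi$ is non-winning for \Me in the sense of \Cref{def:win}, allowing the appeal to \Cref{lemmamodel}. Both facts follow directly from the assumption that $\tau$ is winning and from the exhaustive enumeration of pairs $(g,h)$ built into the definition of $\sigma$, which guarantees that every eligible duplication/scheduling step is eventually performed. Once these routine verifications are dispatched, the proof reduces to the clean contrapositive sketched above.
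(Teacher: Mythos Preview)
Your proposal is correct and follows essentially the same approach as the paper: argue the contrapositive, invoke Gale--Stewart determinacy to obtain \Your winning strategy, play it against \My ``best way'' strategy $\sigma$, and apply \Cref{lem:closure} and \Cref{lemmamodel} to the resulting run $\pi$ to extract a countermodel $\M_\pi$ in which \You win every $g\in D$. The paper's proof is terser but identical in structure; your additional remarks on bookkeeping and the explicit handling of the \cc-case are reasonable elaborations rather than departures.
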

\begin{proof}
Suppose \I do not have a winning strategy in $\mathbf{DG}(D)$.
By the Gale-Stewart Theorem, \You have a winning strategy in this game. Let \Me
play according to the strategy $\sigma$ from above, and let $\pi$ be the run through $\mathbf{DG}(D)$ resulting from \You playing \Your winning strategy and \Me playing $\sigma$. Let $\M_\pi$ be the model from
Definition~\ref{emepsilon}. By Lemma~\ref{lem:closure} and
Lemma~\ref{lemmamodel}, \You have a winning strategy for $\mathbf{G}_{\M_\pi}(g)$ for all $g\in D$. 
\end{proof}

\section{From strategies to Proofs}\label{sec:proofs}
Theorems~\ref{th:adequacy} and \ref{thm:adeq} imply that winning strategies for \Me in the disjunctive game correspond to validity. 
In this section, we will extend this result to proof systems. 
This will be done by introducing a sequent calculus $\DS$, where proofs correspond to \My winning strategies in the
disjunctive game. Before that, we will demonstrate that winning strategies
can be finitized. 

\subsection{\Your optimal choices\label{sect:opt}}
In this section, we show how to adapt the disjunctive game so that it becomes
finitely branching in ``Y''-nodes. This  will help us to conveniently
formulate the disjunctive game as a calculus. Infinite branching occurs only in
the case of the rules $\bfP_{[A]}$ and $\bfO_{\Diamond^\pm}$ where branching is
parametrized by the nominals. We will show that in these situations, there is
an optimal choice for \You, so \I can expect \You to play according to this
choice.

\begin{proposition}\label{prop:bestchoice}
    Let $j$  be a nominal different from $i$ not occurring in $D$ nor in $\phi$. Then:
\noindent \textbf{(1)}  \You have a winning strategy in $\mathbf{DG}(D \bigvee \mathbf{P},i: [A]\phi)$ iff \You have a winning strategy in $\mathbf{DG}(D \bigvee \mathbf{P},j: \phi)$, and similarly for $\mathbf{DG}^\cc$.

    \noindent \textbf{(2)} \You have a winning strategy in $\mathbf{DG}(D \bigvee \mathbf{O},i:\Diamond^\pm \phi)$ iff \You have winning strategies in both $\mathbf{DG}(D \bigvee \mathbf{P},\_: R^\pm(i,j))$ and $\mathbf{DG}(D \bigvee \mathbf{O},j:\phi)$, and similarly for $\mathbf{DG}^\cc$.
\end{proposition}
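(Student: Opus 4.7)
The plan is to route both parts through the adequacy theorem (Theorem~\ref{thm:adeq}), which translates \You's winning in a disjunctive game into the existence of a (\cc-)\PNL-model on which \You wins every component semantic game. The key leverage is that $j$ occurs neither in $D$ nor in $\phi$: its denotation $\g(j)$ can therefore be reassigned freely without disturbing the outcome of any other game in the disjunctive state. Equivalently, at the strategy level, \You's ``best reply'' to $\bfP_{[A]}$ or $\bfO_{\Diamond^\pm}$ is always the fresh nominal $j$, since any other nominal can be renamed to $j$ via a substitution that preserves the rest of the play.

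For (1), the forward direction proceeds as follows. A winning strategy for \You in $\mathbf{DG}(D\bigvee \mathbf{P},i:[A]\phi)$ yields, by Theorem~\ref{thm:adeq}, a (\cc-)\PNL-model $\M$ making \You win every component game; in particular $\M,\g(i)\not\models [A]\phi$, so some agent $b\in \A$ satisfies $\M,b\not\models \phi$. By freshness of $j$, I reassign $\g(j):=b$ to obtain $\M'$ indistinguishable from $\M$ on every game of $D$ and additionally satisfying $\M',\g(j)\not\models \phi$; a second application of Theorem~\ref{thm:adeq} produces the required winning strategy for \You in $\mathbf{DG}(D\bigvee \mathbf{P},j:\phi)$. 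The converse is immediate: any witnessing model for the second game satisfies $\M,\g(j)\not\models \phi$ and hence $\M,\g(i)\not\models [A]\phi$, so the same model witnesses the first.

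For (2), the forward direction is analogous: the model supplied by adequacy has an $R^\pm$-successor $b$ of $\g(i)$ with $\M,b\models \phi$, and setting $\g(j):=b$ produces a single model that simultaneously witnesses \You's wins in both subgames. The main obstacle I expect is the backward direction of (2): the two witnessing models $\M_1,\M_2$ for the two separate subgames need not coincide, and one cannot in general fuse them into a single witness for the big game by a purely semantic argument. My plan is to handle this step directly at the strategy level, constructing \You's strategy for $\mathbf{DG}(D\bigvee \mathbf{O},i:\Diamond^\pm\phi)$ by scheduling $\mathbf{O},i:\Diamond^\pm\phi$, answering $\bfO_{\Diamond^\pm}$ with $j$, and then continuing with $\tau_1$ or $\tau_2$ depending on \I's choice of outcome; \You's responses to prior \I-moves in $D$ are taken uniformly from $\tau_1$, noting that the disjunctive win-condition depends only on the final elementary state, so the relevant restriction of $\tau_2$ to the shared $D$-trace can be fed back through Theorem~\ref{thm:adeq}. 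The \cc\ variants require no separate argument in either item, since all model modifications affect only $\g(j)$ and hence preserve collective connectedness.
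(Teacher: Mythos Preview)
Your forward directions for both items match the paper's approach: invoke adequacy (Theorem~\ref{thm:adeq}) to obtain a witnessing model, then exploit the freshness of $j$ to redirect $\g(j)$ to the relevant agent. The paper carries this out via explicit substitution lemmas (Lemmas~\ref{lem:techicalnom}--\ref{lem:substsurj} in the appendix) that carefully preserve surjectivity of the denotation, whereas you simply reassign $\g(j)$; the underlying idea is the same, and your version is arguably cleaner.

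The mismatch is the backward direction of (2). The paper dismisses it in one phrase (``from right to left is trivial'') and spends its effort on left-to-right; you flag it as the hard step. Your proposed fix does not work. First, there is a role confusion: in the disjunctive game \I am the scheduler (Definition~\ref{def:dg}), so \You cannot ``schedule $\mathbf{O},i:\Diamond^\pm\phi$''---\You only move when \I underline a Y-state. Second, splicing $\tau_1$ and $\tau_2$ is incoherent: the two strategies may dictate incompatible responses on the shared $D$-part, and you give no mechanism to reconcile them. In fact your semantic worry is justified and the two-game formulation fails as literally stated: with $D=\{\mathbf{P},i:\dplus p\}$ and $\phi=p$, \You lose $\mathbf{DG}(\mathbf{P},i:\dplus p\bigvee\mathbf{O},i:\dplus p)$ (this encodes the valid sequent $i{:}\dplus p\Rightarrow i{:}\dplus p$), yet \You win each small game separately via distinct models---one with an $\R^+$-edge from $\g(i)$ to $\g(j)$ but $p$ false everywhere, one with $p$ true at $\g(j)$ but no such edge. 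What makes the backward direction genuinely immediate is the \emph{single} disjunctive state $D\bigvee\mathbf{O},\_:R^\pm(i,j)\bigvee\mathbf{O},j:\phi$, which is presumably the intended reading and matches the combined rule $(L_{\Diamond^\pm})$ of $\dDS$ in Figure~\ref{fig:dDS}.
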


This result implies that \My winning strategies in the disjunctive game can be
finitely represented: in every disjunctive state whose children branch over the
nominals, it is enough to consider a single child only, given by a \emph{fresh} nominal $j$
not appearing in that disjunctive state. Intuitively, the proof of Proposition~\ref{prop:bestchoice} runs by transforming a winning strategy for \You in the semantic game for all game states in $D\bigvee \mathbf{P},k:\phi$ (where $k$ is arbitrary) over a model $\M$ into a winning strategy for \You for all game states in $D\bigvee \mathbf{P},j:\phi$ over a modified model $\M'$, where $j$ is according to the proposition. By the Theorems~\ref{th:adequacy} and \ref{thm:adeq}, this gives the desired result. The actual proof is carried out in the appendix. 

\subsection{The proof system $\DS$}\label{sec:seq-system}

Now we detail how to formally transform (provability) games into (sequent) proof systems.

{\em Labeled nominal formulas} are either \emph{labeled formulas} of
the form $i:\phi$ or \emph{relational atoms} of the form $R(i,j)$,
where $i$ and $j$ are nominals and $\phi$ is a \PNL~formula.\footnote{Observe that here we are abusing the notation, identifying $k:R(i,j)$ with $R(i,j)$. Recall from \Cref{foot:R} that the truth value of  these atoms depend only on the underlying model.} \emph{Labeled sequents} have the form $\Gamma \seq \Delta$, where
$\Gamma,\Delta$ are multisets containing labeled nominal formulas.

Starting with sequents, every disjunctive state of the form
\begin{center}
$
\mathbf{O},i_1: \phi_1\bigvee \ldots \bigvee \mathbf{O},i_n: \phi_n\bigvee\mathbf{P},j_1: \psi_1 \bigvee \ldots \bigvee \mathbf{P},j_m: \psi_m
$
\end{center}
 can be rewritten as the labeled sequent $\Gamma \seq \Delta$ where
$
\Gamma =\{i_1: \phi_1, \ldots, i_n: \phi_n\}\mbox{ and } \Delta =\{j_1: \psi_i,\ldots,j_m: \psi_m\}
$. 
In what follows, we will not distinguish between disjunctive states and their corresponding labeled sequent. For example,
the disjunctive game state
$\mathbf{O},i:(\dplus\dplus p \vee \dminus\dminus p)\bigvee\mathbf{P},i: \dplus p$
will be identified with the sequent  
$i:(\dplus\dplus p \vee \dminus\dminus p)\seq i: \dplus p$.

The inference rules must be tailored in such a way that {\em
proofs} in the sequent system match exactly \My\  {\em winning strategies} in
the disjunctive game. This means that the user of the proof system takes the
role of \Me, scheduling game states and choosing moves in \I-states. 
Moreover, {\em provability} in the proof system should correspond to {\em
validity} in the game. For that,  it is necessary to establish the
formal relationship between elementary game states and logical axioms.

\begin{lemma}\label{lemma:init}
    Let $\Gamma\seq \Delta$ be composed of elementary game states only. \I win the disjunctive game in $\Gamma \seq \Delta$ iff one of the following holds\footnote{\label{foot:sym}Since relations are symmetric, we will identify $R^\pm(i,j)$ with  $R^\pm(j,i)$.}

    \noindent\textbf{i.} $R^-(i,i)\in\Gamma$ or $R^+(i,i)\in\Delta$ for some $i$;

    \noindent\textbf{ii.} $\{R^+(i,j),R^-(i,j)\}\subseteq\Gamma$ for some $i\not=j$;

    \noindent\textbf{iii.} $\Gamma\cap\Delta\not=\emptyset$.

In the case of collectively connected models, additionally, 

\noindent\textbf{iv.}  $\{R^+(i,j),R^-(i,j)\}\subseteq\Delta$ for some $i\not=j$.  \end{lemma}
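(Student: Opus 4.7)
The plan is to prove both directions by translating the game-theoretic winning condition for elementary sequents into a model-theoretic one. By Definition~\ref{def:win} specialized to elementary game states, $\Gamma \seq \Delta$ is winning for me iff in every (\cc-)\PNL-model at least one $\mathbf{O}$-state in $\Gamma$ is falsified or at least one $\mathbf{P}$-state in $\Delta$ is satisfied, i.e., iff $\bigwedge\Gamma \to \bigvee\Delta$ is valid in the corresponding model class.

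For the ``if'' direction, I would verify each condition individually by a direct case analysis from the frame properties. Condition (i) uses reflexivity of $\R^+$ together with disjointness (so $R^-(i,i)$ is false and $R^+(i,i)$ is true in every \PNL-model); (ii) uses disjointness directly (at least one of $R^+(i,j),R^-(i,j)$ must fail, yielding a winning $\mathbf{O}$-state); (iii) uses excluded middle on the shared labeled formula, as one of its two copies must win; and in the \cc\ case, (iv) uses collective connectedness to ensure that at least one of the two relational atoms in $\Delta$ is satisfied and thus wins as a $\mathbf{P}$-state.

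For the ``only if'' direction, I would argue by contraposition, constructing a concrete counter-model when all listed conditions fail. The construction mirrors $\M_\mathcal{E}$ (resp.\ $\M_\mathcal{E}^\cc$) from Definition~\ref{emepsilon}, instantiated with the elementary labeled formulas of $\Gamma$ and $\Delta$: agents are the nominals occurring in $\Gamma\cup\Delta$; $\R^+$ is the reflexive-symmetric closure of $\{(\ag_i,\ag_j):R^+(i,j)\in\Gamma\}$; $\R^-$ is the symmetric closure of $\{(\ag_i,\ag_j):R^-(i,j)\in\Gamma\}$; and $\ag_i\in\V(p)$ iff $i{:}p\in\Gamma$. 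For the \cc\ case, I would further assign each pair that is neither forced into $\R^+$ nor into $\R^-$ according to the clauses of Definition~\ref{emepsilon}.

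The hard part is verifying that the counter-model is genuinely a (\cc-)\PNL-model and refutes every $\mathbf{O}$- and $\mathbf{P}$-state of the sequent. Disjointness of $\R^+$ and $\R^-$ requires a case analysis: failure of (i) blocks overlap at reflexive pairs and failure of (ii) blocks overlap elsewhere in the \PNL\ case, while in the \cc\ case the argument essentially replays cases (1)--(5) in the proof of Lemma~\ref{lemmamodel}, with the failure of (iv) excluding the double-$\Delta$ conflict. Once disjointness is secured, the correctness of the valuation and of the relational atoms on the $\Delta$-side reduces to the failure of (iii), which ensures that no atom in $\Delta$ is accidentally also in $\Gamma$ and hence none of these atoms is inadvertently made true.
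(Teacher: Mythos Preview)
Your proposal is correct and follows essentially the same route as the paper: the ``if'' direction by direct appeal to the frame conditions, and the ``only if'' direction by instantiating the counter-model $\M_{\Gamma\Rightarrow\Delta}$ (resp.\ $\M_{\Gamma\Rightarrow\Delta}^\cc$) of Definition~\ref{emepsilon} and checking non-overlap via the case analysis of Lemma~\ref{lemmamodel}. One small imprecision: on the $\Delta$-side the failure of (iii) alone does not suffice for relational atoms, since $R^+(i,i)\in\Delta$ is made true by the reflexive closure rather than by membership in $\Gamma$; you also need $\neg$(i) there, but you already invoke (i) elsewhere, so this is only a matter of phrasing.
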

\begin{proof}
By definition of the disjunctive game, it is immediate that \I win the game if (iii) holds. Moreover,
\I clearly win the game if either (i) or (ii) hold, since only $R^+$ is
reflexive and since the relations are non-overlapping. Finally, if the model is collectively connected, \I clearly win the game if (iv) holds.

Suppose that (i), (ii) and (iii) do not hold. Then \You have a winning strategy in the model $\M_{\Gamma \Rightarrow \Delta}$ from Definition~\ref{emepsilon}. If additionally (iv) does not hold, we choose the model $\M_{\Gamma \Rightarrow \Delta}^\cc$.

By proceeding as in Lemma~\ref{lemmamodel}, it is easy to see that $R^+$ is reflexive, $R^\pm$ is symmetric,  both models are non-overlapping and $\M^\cc_{\Gamma \Rightarrow \Delta}$ is collectively connected.
\end{proof}

Figure~\ref{fig:calculus} presents the labeled sequent systems $\DS$ and $\DS^\cc$, with the standard initial axiom and structural/propositional  rules. The modal  rules and the relational rules $sym$ and  $ref\pm$  coincides with the modal rules originally presented by Vigan\`{o} in~\cite{Vigano:2000}, adapted to multi-relational modal logics. 

It is routine to show that the rules $no$ and $cc$ in Figure~\ref{fig:calculus} correspond to the non-overlapping and  collectively connected axioms, respectively

\qquad\qquad\qquad$
\forall i,j. \neg(R^+(i,j)\wedge R^-(i,j)) \quad\mbox{and}\quad \forall i,j. R^+(i,j)\vee R^-(i,j)
$

The following result, 
proved in Appendix~\ref{app:proofs}, entails a normal form on proofs in $\DS$, since any proof-search procedure can be restricted so to start with applications of logical rules followed by relational rules and the initial axiom.

\begin{lemma}\label{lemma:norm} 
In a bottom-up reading of derivations, the relational rules permutes up w.r.t. any other logical rule in $\DS/\DS^\cc$.
Moreover, the weakening rules below are admissible in $\DS/\DS^\cc$.

\vspace{0.15cm}

\qquad\qquad\qquad\qquad\qquad$
\infer[L_w]{\Gamma,i:\phi\seq\Delta}{\Gamma\seq\Delta} \qquad \infer[R_w]{\Gamma\seq\Delta,i:\phi}{\Gamma\seq\Delta}
$
\end{lemma}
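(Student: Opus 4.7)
The plan is to prove both claims by induction on the height of derivations, exploiting the fact that relational and logical rules operate on disjoint syntactic classes of formulas.

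For the permutation claim, I would argue as follows. Consider a fragment of a derivation in which a relational rule $r_r \in \{sym, ref\pm, no, cc\}$ is applied immediately below a logical rule $r_l$ (bottom-up reading). The principal formulas of logical rules are labeled formulas $i:\phi$ whose main operator is a propositional connective or a modality $\Diamond^\pm, [A]$, whereas the active atoms of $r_r$ are purely relational atoms of the form $R^\pm(i,j)$. These two kinds of formulas are syntactically disjoint, so the active formulas of $r_l$ are not touched by $r_r$ and vice versa. Hence the premises of $r_l$ remain available after we first apply $r_r$, and the side conditions of $r_r$ (reflexivity, symmetry, non-overlapping, or collective connectedness) are preserved when $r_l$ is applied afterwards. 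A uniform case analysis over the finite combinations of $r_l$ and $r_r$ establishes the local swap; iterating the swap yields the normal form where all relational inferences appear immediately above the initial axioms.

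For the admissibility of $L_w$ and $R_w$, I would show simultaneously, by induction on the height of a derivation $\pi$ of $\Gamma \seq \Delta$, that derivations of $\Gamma, i:\phi \seq \Delta$ and $\Gamma \seq \Delta, i:\phi$ of the same height exist. In the base case, $\pi$ is an axiom, hence $\Gamma \seq \Delta$ satisfies one of the closure conditions (i)--(iv) of Lemma~\ref{lemma:init}, and each of these conditions is monotone: adding a labeled formula on either side cannot destroy it. In the inductive step, let $r$ be the last rule applied in $\pi$. We apply the induction hypothesis to each premise of $r$ to add the extra $i:\phi$, and then re-apply $r$ to obtain the weakened conclusion.

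The one delicate point I foresee is handling rules carrying an eigenvariable condition, namely the rules mirroring $\bfP_{[A]}$ and $\bfO_{\Diamond^\pm}$ that introduce a fresh nominal $j$ (cf.\ Proposition~\ref{prop:bestchoice}). If the nominal $i$ occurring in the formula we want to weaken with happens to coincide with the eigenvariable $j$, the freshness side condition would be violated after weakening. This is overcome by a standard preliminary substitution lemma: renaming a fresh nominal $j$ in a derivation to another fresh nominal $j'$ preserves derivability and height. One renames the eigenvariable before invoking the induction hypothesis. This is the only technical subtlety; everything else is a routine rule-by-rule verification.
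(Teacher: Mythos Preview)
Your overall plan matches the paper's: weakening by induction on the height of derivations (with the eigenvariable renaming you mention), and permutation by case analysis. The paper's proof in the appendix is equally terse, calling weakening ``standard'' and giving one worked example of the permutation.

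There is, however, a real gap in your permutation argument. You claim that because the principal formulas of logical rules and the active atoms of relational rules are syntactically disjoint, the swap is a simple local exchange. This is not quite true for the branching relational rule $no$ (and the situation is analogous whenever the logical rule introduces a fresh nominal). In the configuration
\[
\infer[no]{\Gamma, i: \dminus \phi\Rightarrow \Delta}
{\infer[(L_{\dminus})_2]{\Gamma, i: \dminus \phi\Rightarrow \Delta,R^+(k,l)}{\Gamma, j: \phi\Rightarrow \Delta,R^+(k,l)}&\Gamma, i: \dminus \phi\Rightarrow \Delta,R^-(k,l)}
\]
only the \emph{left} premise of $no$ is the conclusion of the logical rule; the right premise $\pi_2$ still carries $i:\dminus\phi$, not $j:\phi$. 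After pushing $no$ upward you need both premises to live over the context $\Gamma, j:\phi, i:\dminus\phi$, which forces you to \emph{weaken} $\pi_1$ by $i:\dminus\phi$ and $\pi_2$ by $j:\phi$; the subsequent application of $(L_{\dminus})_2$ then produces two copies of $i:\dminus\phi$, which must be \emph{contracted}. This is exactly the example the paper works out. So the dependency runs the other way from your write-up: establish admissibility of weakening first, and then use weakening together with the primitive contraction rules $L_c,R_c$ inside the permutation case analysis. Your disjointness observation explains why the principal formulas never clash, but it does not by itself yield the swap.
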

The following result immediately implies that the disjunctive game $\mathbf{DG}$ is adequate with respect to the calculus $\DS$.

\begin{theorem}\label{thm:adequacy-sequent}
\I have a winning strategy in the disjunctive game $\mathbf{DG}(\Gamma\seq\Delta)$ (in $\mathbf{DG}^\cc(\Gamma \Rightarrow \Delta)$) iff $\Gamma\seq\Delta$ is provable in $\DS$ (in $\DS^\cc$). 
\end{theorem}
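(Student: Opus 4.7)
The plan is to prove both directions by structural induction, exploiting the intended one-to-one correspondence between disjunctive-game moves and sequent rules built up in Sections \ref{sec:dis-game} and \ref{sec:seq-system}. The forward direction traverses a winning strategy and produces a derivation; the backward direction traverses a derivation and extracts a winning strategy. By Theorem~\ref{thm:adeq} (or the Gale–Stewart style determinacy) the game is well-defined for this induction, and by Proposition~\ref{prop:finiteheight}-style finiteness arguments (together with the finiteness of derivations) both trees are well-founded.

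For the $(\Rightarrow)$ direction, proceed by induction on the height of a winning strategy $\sigma$ for \Me in $\mathbf{DG}(\Gamma\seq\Delta)$. At the leaves, $\sigma$ has reached an elementary disjunctive state in which \I win; by Lemma~\ref{lemma:init}, one of the cases (i)--(iv) holds, and each of these is derivable from an initial axiom by a finite sequence of applications of $sym$, $ref\pm$, $no$ (and, for $\DS^\cc$, $cc$). The inductive step translates moves to sequent rules uniformly: a (Dupl) move becomes a contraction, a (Sched) move is absorbed into the choice of principal formula of the next rule, and a (Move) move becomes the corresponding logical rule. Moves by \You (branching into all possible children in $\sigma$) produce multi-premise rules, whereas moves by \Me (a single child in $\sigma$) produce one-premise rules, and in each case the subderivations are obtained from the inductive hypothesis.

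For the $(\Leftarrow)$ direction, proceed by induction on the height of a derivation $\pi$ in $\DS$ (resp.\ $\DS^\cc$). By Lemma~\ref{lemma:norm} we may assume $\pi$ is in normal form, with relational rules appearing only immediately above initial axioms; then Lemma~\ref{lemma:init} gives that each leaf of $\pi$ corresponds to an elementary disjunctive state in which \I win, so the leaves yield winning sub-strategies. For the inductive step, each sequent inference is mirrored by a block of game moves: contraction becomes (Dupl); a one-premise logical rule becomes (Sched) followed by a (Move) made by \Me, for which the inductive hypothesis supplies the sub-strategy; a multi-premise logical rule becomes (Sched) followed by a (Move) made by \You, with the inductive hypothesis supplying a winning sub-strategy for each premise/branch. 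The admissibility of weakening from Lemma~\ref{lemma:norm} lets us embed each inductively given strategy into the larger disjunctive state that arises in the game.

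The main obstacle will be handling the nominal-branching rules, namely the right-rule for $[A]$ and the left-rule for $\dplusminus$, which carry an eigenvariable side-condition, while the corresponding game states $\mathbf{P},i:[A]\phi$ and $\mathbf{O},i:\dplusminus\phi$ appear to offer \You a choice over arbitrarily many nominals. The bridge here is Proposition~\ref{prop:bestchoice}: \Your optimal choice in both situations is to pick a nominal $j$ fresh in the current disjunctive state, which matches exactly the freshness condition imposed on the eigenvariable in the sequent rule. Once this equivalence is invoked, the translation of game moves into rules (and vice versa) becomes mechanical, and the theorem follows.
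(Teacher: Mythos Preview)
Your proposal is correct and follows essentially the same approach as the paper: both argue by a correspondence between game moves and sequent rules, mapping \textbf{(Dupl)} to contraction, \textbf{(Sched)}/\textbf{(Move)} to the logical rules (with branching matching \Your moves), \textbf{(End)} to the axioms via Lemma~\ref{lemma:init}, and invoking Proposition~\ref{prop:bestchoice} to finitize the nominal-branching cases and Lemma~\ref{lemma:norm} for the normal form. Your write-up is in fact more explicit than the paper's, which presents the same correspondence as a terse bullet-point sketch rather than spelling out the two inductions separately.
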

\begin{proof} The proof is by case analysis in the rules of the disjunctive game/last rule applied in the proof, and it is based in the following correspondence:

\noindent - \textbf{(Dupl)} Duplication in the game corresponds to left and right contraction (Rules $R_c$, $L_c$). 

\noindent - \textbf{(Sched)} Scheduling game moves over non-elementary formulas corresponds to choosing propositional or modal rules to be applied.
Observe that Lemma~\ref{lemma:norm} guarantees that propositional and modal rules in $\DS$ can always be chosen before dealing with the elementary case.

\noindent - \textbf{(Move)} Applying the rule chosen in \textbf{(Sched)} corresponds to applying the respective sequent rule in $\DS$.  Note that \I should be prepared to any movement from \You. Hence, branching in  \Your possible moves corresponds to branching in a sequent rule. On the other hand, infinite branching is handled as 
    explained in \Cref{sect:opt}.

\noindent - \textbf{(End)} Due to Lemma~\ref{lemma:init}, winning states are completely captured by the axioms.
\end{proof}

Let us write $\models_\PNL \Gamma \Rightarrow \Delta$ iff for every \PNL-model there is some $i:\gamma \in \Gamma$ such that $\M,\g(i)\not\vdash \gamma$, or there is some $i:\delta \in \Delta$ such that $\M,\g(i)\models \delta$. Similarly, we define $\models_{\cc\PNL} \Gamma \Rightarrow \Delta$. We have the following consequence of Theorems~\ref{th:adequacy},~\ref{thm:adeq}, and~\ref{thm:adequacy-sequent}: 

\begin{corollary}
Let $\Gamma,\Delta$ be multisets of labeled formulas. Then $\models_\PNL\Gamma \Rightarrow \Delta$ ($\models_{\cc\PNL} \Gamma \Rightarrow \Delta$) iff there is a proof of $\Gamma \Rightarrow \Delta$ in $\DS$ (in $\DS^\cc$). In particular, $\phi$ is (\cc-) \PNL-valid iff there is a proof of $\Rightarrow \phi$ in $\DS$ (in $\DS^\cc$).
\end{corollary}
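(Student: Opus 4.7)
The plan is to obtain this corollary as a direct consequence of chaining together the three adequacy results proved earlier in the paper, namely Theorem~\ref{th:adequacy} (semantic game versus Kripke truth), Theorem~\ref{thm:adeq} (disjunctive game versus truth in all models), and Theorem~\ref{thm:adequacy-sequent} (disjunctive game versus provability in $\DS$/$\DS^\cc$). The key observation to set up the chaining is that the correspondence established before Lemma~\ref{lemma:init} identifies a labeled sequent $\Gamma \Rightarrow \Delta$ with the disjunctive state obtained by mapping each $i:\gamma \in \Gamma$ to the semantic game state $\mathbf{O}, i:\gamma$ and each $j:\delta \in \Delta$ to $\mathbf{P}, j:\delta$.

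First, I would unfold the definition of $\models_\PNL \Gamma \Rightarrow \Delta$: by definition, this means that for every \PNL-model $\M$ either some $i:\gamma \in \Gamma$ fails at $\g(i)$ or some $j:\delta \in \Delta$ holds at $\g(j)$. By Theorem~\ref{th:adequacy}, ``failing at $\g(i)$'' for $\gamma$ is equivalent to \My having a winning strategy in $\mathbf{G}_\M(\mathbf{O}, i:\gamma)$, and similarly for the right-hand side with $\mathbf{P}, j:\delta$. Hence $\models_\PNL \Gamma \Rightarrow \Delta$ is equivalent to: for every \PNL-model $\M$, there is some $g$ in the disjunctive state $\Gamma \Rightarrow \Delta$ such that \I have a winning strategy in $\mathbf{G}_\M(g)$.

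Second, I would apply Theorem~\ref{thm:adeq} to conclude that the latter condition is equivalent to \My having a winning strategy in $\mathbf{DG}(\Gamma \Rightarrow \Delta)$. Finally, Theorem~\ref{thm:adequacy-sequent} gives that \I have a winning strategy in $\mathbf{DG}(\Gamma \Rightarrow \Delta)$ iff $\Gamma \Rightarrow \Delta$ is derivable in $\DS$. The whole argument transfers verbatim to the \cc case by replacing $\pnlmodels$, $\mathbf{DG}$, and $\DS$ with $\ccpnlmodels$, $\mathbf{DG}^\cc$, and $\DS^\cc$ throughout.

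For the ``In particular'' clause, I would note that $\phi$ being (\cc-)\PNL-valid means $\M, \ag \Vdash \phi$ for every model $\M$ and agent $\ag$, which by Remark~\ref{rem:surjg} and the internalization of nominals from Section~\ref{sec:intern} can equivalently be phrased as $\models_\PNL\ \Rightarrow i:\phi$ for a fresh nominal $i$ (or equivalently $\models_\PNL\ \Rightarrow i:[A]\phi$), whence the main statement applies. The only nontrivial bookkeeping step is the alignment of polarities between game states and sequent sides, but this was already baked into the correspondence used in the statement and proof of Theorem~\ref{thm:adequacy-sequent}; so there is no real obstacle, the corollary is essentially a transitivity argument across the three adequacy theorems.
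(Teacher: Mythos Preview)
Your proposal is correct and matches the paper's own treatment: the corollary is stated immediately after Theorem~\ref{thm:adequacy-sequent} as a direct consequence of Theorems~\ref{th:adequacy}, \ref{thm:adeq}, and~\ref{thm:adequacy-sequent}, with no further argument given. Your unpacking of the transitivity chain, including the identification of $\mathbf{O}$-states with left-hand formulas and $\mathbf{P}$-states with right-hand formulas and the handling of the ``in particular'' clause via $[A]\phi$, is exactly the intended reasoning.
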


\begin{figure}
    \begin{center}
\resizebox{.85\textwidth}{!}{
\begin{minipage}[t]{\textwidth}
	\headline{\sc Axiom and Structural Rules}
	
	\medskip

    \begin{center}
 \begin{prooftree}
        \hypo {}
        \infer1 [$init$]{\Gamma, i: \phi_{el} \Rightarrow  \Delta, i: \phi_{el}}
        \end{prooftree}
\qquad
 \begin{prooftree}
        \hypo {\Gamma, i:  \phi, i: \phi \Rightarrow  \Delta}
        \infer1 [$(L_c)$]{\Gamma, i: \phi \Rightarrow  \Delta}
        \end{prooftree}        
		\qquad
        \begin{prooftree}
        \hypo {\Gamma \Rightarrow  i: \phi,i: \phi,\Delta}
        \infer1 [$(R_c)$]{\Gamma \Rightarrow  i: \phi, \Delta}
        \end{prooftree}
    \end{center}
        
\headline{\sc Propositional Rules}
\begin{center}
        \begin{prooftree}
        \hypo {\Gamma \Rightarrow i: \phi, \Delta}
        \infer1 [\((L_\neg)\)]{\Gamma, i: \neg \phi \Rightarrow \Delta}
        \end{prooftree}
  \qquad
  \begin{prooftree}
        \hypo {\Gamma, i:  \phi \Rightarrow \Delta}
        \infer1[\((R_\neg)\)]{\Gamma \Rightarrow i: \neg  \phi, \Delta}
        \end{prooftree}

\bigskip

        \begin{prooftree}
        \hypo {\Gamma, i: \phi \Rightarrow \Delta}
        \hypo{\Gamma, i: \psi \Rightarrow \Delta}
        \infer2 [\((L_\vee)\)]{\Gamma, i: \phi \vee  \psi \Rightarrow \Delta}
        \end{prooftree}
        \qquad
        \begin{prooftree}
        \hypo {\Gamma \Rightarrow i:  \phi, \Delta}
        \infer1[\((R_\vee^1)\)]{\Gamma \Rightarrow i:  \phi \vee  \psi, \Delta}
        \end{prooftree}
        \qquad 
        \begin{prooftree}
        \hypo {\Gamma \Rightarrow i:  \psi, \Delta}
        \infer1[\((R_\vee^2)\)]{\Gamma \Rightarrow i:  \phi \vee  \psi, \Delta}
        \end{prooftree}
        \bigskip

        \begin{prooftree}
        \hypo {\Gamma, i: \phi \Rightarrow \Delta}
        \infer1 [\((L_\wedge^1)\)]{\Gamma, i: \phi \wedge  \psi \Rightarrow \Delta}
        \end{prooftree}
       \qquad
    	\begin{prooftree}
        \hypo {\Gamma, i: \psi \Rightarrow \Delta}
        \infer1 [\((L_\wedge^2)\)]{\Gamma, i: \phi \wedge  \psi \Rightarrow \Delta}
        \end{prooftree}
       \qquad
       \begin{prooftree}
        \hypo {\Gamma \Rightarrow i:  \phi, \Delta}
        \hypo {\Gamma \Rightarrow i:  \psi, \Delta}
        \infer2[\((R_\wedge)\)]{\Gamma \Rightarrow i:  \phi \wedge  \psi, \Delta}
        \end{prooftree}
                
    \end{center}
\headline{\sc Modal Rules}
\begin{center}
         \begin{prooftree}
        \hypo {\Gamma, R^\pm(i,j) \Rightarrow \Delta}
        \infer1 [\((L_{\Diamond^\pm})_1\)]{\Gamma, i: \Diamond^\pm \phi\Rightarrow \Delta}
        \end{prooftree}
       \qquad
       \begin{prooftree}
        \hypo {\Gamma, j:\phi \Rightarrow \Delta}  
        \infer1 [\((L_{\Diamond^\pm})_2\)]{\Gamma, i: \Diamond^\pm \phi\Rightarrow \Delta}
        \end{prooftree}
      \bigskip
      
      \begin{prooftree}
        \hypo {\Gamma \Rightarrow R^\pm(i,j), \Delta}
        \hypo {\Gamma \Rightarrow j: \phi, \Delta}
        \infer2 [\((R_{\Diamond^\pm})\)]{\Gamma\Rightarrow i: \Diamond^\pm \phi,\Delta}
        \end{prooftree}
        \quad
        \begin{prooftree}
        \hypo {\Gamma, j: \phi \Rightarrow \Delta}
        \infer1 [$(L_{[A]})$]{\Gamma,  i:[A]\phi \Rightarrow \Delta}
        \end{prooftree}
        \quad
        \begin{prooftree}
        \hypo {\Gamma \Rightarrow  j: \phi, \Delta}
        \infer1 [$(R_{[A]})$]{\Gamma\Rightarrow  i:[A]\phi, \Delta}
        \end{prooftree}

    \end{center}

\headline{\sc Relational Rules}
\begin{center}
         \begin{prooftree}
        \hypo {\Gamma \Rightarrow  \Delta, R^{\pm}(j,i)}
        \infer1 [$sym$]{\Gamma \Rightarrow  \Delta, R^{\pm}(i,j)}
        \end{prooftree}
        \qquad
\begin{prooftree}
        \hypo { }
        \infer1 [$ref+$]{\Gamma\Rightarrow  \Delta, R^{+}(i,i)}
        \end{prooftree}
        \bigskip

\begin{prooftree}
        \hypo { }
        \infer1 [$ref-$]{\Gamma, R^{-}(i,i)\Rightarrow  \Delta}
        \end{prooftree}
        \qquad
        \begin{prooftree}
        \hypo {\Gamma \Rightarrow  \Delta, R^+(i,j)}
        \hypo {\Gamma \Rightarrow  \Delta, R^-(i,j)}
        \infer2 [$no$]{\Gamma \Rightarrow  \Delta}
        \end{prooftree}
    \end{center}
        
\headline{\sc Collectively Connected (for $\DS^\cc$)}
\begin{center}
         \begin{prooftree}
        \hypo {}
        \infer1 [$cc$]{\Gamma \Rightarrow  \Delta, R^+(i,j), R^-(i,j) }
        \end{prooftree}
\end{center}
\end{minipage}
}
\end{center}
\vspace{-0.3cm}
\caption{The proof system $\DS$.
In the rule init, $\phi_{el}$ denotes an
elementary formula. In the rules   $(L_{\Diamond^\pm})_1$,
$(L_{\Diamond^\pm})_2$, and $(R_{[A]})$,  the nominal $j$ is fresh.The rule
$R_{\meddiamondminus}$ has the proviso that $i\neq j$. The system $\DS^\cc$ also includes the rule $cc$ for reasoning about collectively connected systems. \label{fig:calculus} } 
\end{figure}
The next examples shows how $\DS^\cc$ elegantly captures collectively connectedness. 
\begin{example}[Connectedness]\label{ex:corr}
    The formula $(\bplus p \to [A]p) \vee (\bminus p \to [A]p)$,
characterizing collective connectedness \cite{DBLP:journals/logcom/PedersenSA21},
has the following proof in $\DS^\cc$:\\

\qquad\noindent
\resizebox{.83\textwidth}{!}{
    $
       \infer=[R_\vee,R_w,R_\neg,L_\neg]{ \Rightarrow i:(\bplus p \to [A]p) \vee (\bminus p \to [A]p)}{
\infer=[R_{[A]}]{\Rightarrow i:[A]p, i:\dplus \neg p, i:\dminus \neg p}{
                    \infer[R_\dplus]{\Rightarrow j:p,  i:\dplus \neg p, i:\dminus \neg p}{
                        \infer[R_\dminus]{\Rightarrow j:p, i:\dminus \neg p, R^+(i,j)}{
                            \infer[cc]{\Rightarrow j:p, R^+(i,j), R^-(i,j)}{}
                            &
                            \infer=[R_\neg, init]{\Rightarrow j:p, j:\neg p, R^+(i,j)}{}
                        }
                        &
                        \infer=[R_\neg,init]{\Rightarrow j:p, j:\neg p,i:\dminus \neg p}{}
                    }
                }
}
    $
}
\end{example}
Proving cut-admissibility of labeled systems can be cumbersome due to the presence of relational rules. In~\cite{DBLP:journals/apal/MarinMPV22}, a systematic procedure for
transforming axioms into rules was presented, based on {\em focusing} and {\em
polarities}~\cite{andreoli92jlc}. This procedure not only allows for 
generalizing different approaches for transforming axioms into sequent
rules present in the literature~\cite{Sim94,Vigano:2000,Neg05}, but it also provides 
a uniform way of proving cut-admissibility for the resulting systems.

While it is out of the scope of this paper to introduce all this machinery just to prove cut-admissibility of $\DS/\DS^\cc$, we note that it is possible to directly transform the semantic description of (\cc-)\PNL~into a labeled sequent system equivalent to $\DS/\DS^\cc$, by using the methodology in~\cite{DBLP:journals/apal/MarinMPV22} and  adopting the {\em negative polarity} to atomic formulas. Hence the cut-admissibility result for $\DS/\DS^\cc$ is a particular instance of the general result in~\cite{DBLP:journals/apal/MarinMPV22}.
\begin{theorem}\label{thm:cut}
The following cut rule is admissible in $\DS/\DS^\cc$

\vspace{0.15cm}
\qquad\qquad\qquad\qquad$
\infer[cut]{\Gamma\seq\Delta}{\Gamma\seq\Delta,i:\phi & i:\phi,\Gamma\seq\Delta}
$
\end{theorem}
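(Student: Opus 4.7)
The plan is to prove cut admissibility by the standard double induction with the cut-formula complexity as the primary parameter and the sum of heights of the two cut derivations as the secondary parameter. Before treating cut itself, I would verify that contraction (the explicit rules $L_c$ and $R_c$) admits a depth-preserving version; weakening is already admissible by Lemma~\ref{lemma:norm}. Both facts are needed to license the substitutions of fresh eigenvariables and the mergings of contexts used in the reduction steps.

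The case analysis proceeds on the pair of last rules applied in the two cut premises. Axiom cases (one premise ending in $init$, $ref\pm$, or $cc$) and non-principal cases (where the cut formula is passed through by the last rule on one side) are routine; the commutation of cut with the relational rules is guaranteed by the upward permutability stated in Lemma~\ref{lemma:norm}. For principal propositional cases ($\neg$, $\wedge$, $\vee$), cuts reduce in the usual way to cuts on proper subformulas. The principal modal cases are more delicate but manageable: for a cut on $i:[A]\phi$, the fresh eigenvariable $k$ introduced by $R_{[A]}$ is substituted by the nominal $j$ used in $L_{[A]}$, yielding a depth-preserving derivation of $\Gamma\seq j:\phi,\Delta$ followed by a cut on $j:\phi$; for $i:\Diamond^\pm\phi$, substituting the fresh eigenvariable of $(L_{\Diamond^\pm})_1$ or $(L_{\Diamond^\pm})_2$ by the witness nominal $j$ used in $R_{\Diamond^\pm}$ yields either a cut on the relational atom $R^\pm(i,j)$ or a cut on $j:\phi$, both of strictly smaller complexity than $\Diamond^\pm\phi$.

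The main obstacle is the treatment of cuts on relational atoms $R^\pm(i,j)$ that arise from the modal reductions above. Such atoms are introduced on the right by $R_{\Diamond^\pm}$, $sym$, $ref+$, and, in $\DS^\cc$, also by $cc$, and on the left only via $init$ or as a hypothesis passed through $(L_{\Diamond^\pm})_1$. The resulting principal cases are unusual because the rules $no$ and $cc$ have no single principal formula: $no$ eliminates two relational atoms in passing from premises to conclusion, while $cc$ produces two atoms at once. The reductions must therefore carefully combine these derivations, exploiting the fact that relational rules yield only relational atoms, so these cuts cannot cascade back into higher-complexity formulas and the primary induction is preserved.

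To make these reductions systematic, the cleanest route is to invoke the polarization-based framework of~\cite{DBLP:journals/apal/MarinMPV22}: by assigning negative polarity to atomic formulas, the non-overlapping and collective-connectedness axioms are internalized as focused bipoles producing exactly the rules $no$ and $cc$, and cut admissibility for $\DS/\DS^\cc$ follows as an instance of the generic cut-admissibility theorem proved there. A direct syntactic proof along the lines above is also possible, but essentially duplicates the schematic reductions that are handled uniformly by that framework, so I would prefer to state the argument as an application of~\cite{DBLP:journals/apal/MarinMPV22}, leaving only the verification that the rules of $\DS/\DS^\cc$ are exactly the focused bipoles of the corresponding axioms as an explicit check.
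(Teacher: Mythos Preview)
Your proposal is correct and lands exactly where the paper does: the paper does not carry out a syntactic cut-elimination at all but simply observes, in the paragraph preceding the theorem, that $\DS/\DS^\cc$ arise from the semantic description of (\cc-)\PNL\ via the focusing/polarization methodology of~\cite{DBLP:journals/apal/MarinMPV22} with negative polarity on atoms, so cut admissibility is an instance of the general result proved there. Your final paragraph states precisely this route, and your additional sketch of the direct double-induction argument (including the correct identification of the relational-atom cuts arising from the $\Diamond^\pm$ reduction and the interaction with $no$ and $cc$ as the only non-routine point) goes beyond what the paper provides but is consistent with it.
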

As a consequence, $\DS/\DS^\cc$ are consistent, since the only rules that can be applied in an empty sequent is $no$ and it is routine to show that it does not trivialize derivations. Moreover, cut-admissibility also serves as a tool for proving meta-theoretical properties: $(\bplus p \to [A]p) \vee (\bminus p \to [A]p)$ in Example~\ref{ex:corr} is not provable in $\DS$, that is, the rule $cc$ is necessary.

\section{Dynamic operators and extensions}
\label{sec:extensions}

In this section we show how the
global link-adding and local link change  modalities from
\cite{DBLP:journals/logcom/PedersenSA21} 
can be defined in our framework.
Adding such modalities requires the underlying model $\M$ to be part of the game
state, and calls for a different presentation of resulting sequent calculus.

The logic \dPNL~extends the syntax in Section \ref{sec:pnl} with the following cases:\\
\[
  \phi ::=\cdots \mid  \pnlP \phi\mid  \pnlN \phi\mid  \pnlPN \phi \mid  
 \pnlOP \phi \mid  \pnlON \phi
\]

\noindent
The global link-adding modalities, $\pnlP \phi$ (resp. $\pnlN  \phi$) is forced if, after adding a positive (resp. negative) link \emph{somewhere} in the network, $\phi$ holds. $\pnlPN$  adds either a positive or a negative link.
The local change modality 
$\pnlOP \phi$ (resp. $\pnlON \phi$) holds at agent $\ag$ whenever $\phi$ holds after changing one of the $\ag$'s
links from negative to positive (resp. from positive to negative).

In the following, if $\bbM = \langle \A,\R^+,\R^-,\V,\g\rangle$,
 we denote by with $\bbM \cup \{\ag \Rinv^+ \b\}$ the model 
$ \langle \A,\R^+ \cup \{\ag \R^+ \b, \b \R^+ \ag\},\R^-,\V,\g\rangle$.
Similarly for $\bbM \cup \{\ag \Rinv^- \b\}$. We denote by
$\bbM \setminus \{\ag \Rinv^{+} \b\}$ the model where 
both $(\ag,\b)$ and $(\b,\ag)$ are removed from $\R^+$. Similarly 
for $\bbM \setminus \{\ag \Rinv^{-}\b\}$.

The semantics of the new operators is the following \cite{DBLP:journals/logcom/PedersenSA21}:

\[
\small
\begin{array}{lll}
    \bbM,\ag \Vdash \pnlP\phi &\text{ iff } \text{  there are } \b,\c\in \A \text{ s.t. }   (\b,\c)\not\in \R^- \text{ and }  \bbM\cup\{\b \Rinv^+ \c\},\ag \Vdash  \phi\\
    \bbM,\ag \Vdash \pnlN\phi &\text{ iff } \text{  there are } \b,\c\in \A \text{ s.t. }   (\b,\c)\not\in \R^+ \text{ and }  \bbM\cup\{\b \Rinv^- \c\},\ag \Vdash  \phi\\
    \bbM,\ag \Vdash \pnlPN\phi &\text{ iff } \text{ there are } \b,\c\in \A \text{ s.t. }   (\b,\c)\not\in \R^- \text{ and }  \bbM\cup\{\b \Rinv^+ \c\},\ag \Vdash  \phi\\
                               & \text{ or there are } \b,\c\in \A \text{ s.t. }   (\b,\c)\not\in \R^+ \text{ and }  \bbM\cup\{\b \Rinv^- \c\},\ag \Vdash  \phi\\
    \bbM,\ag \Vdash \pnlOP\phi &\text{ iff } \text{ there is } \b \in \A \text{ s.t. }  (\ag,\b)\in \R^- \text{ and }  \bbM\cup\{\ag \Rinv^+ \b\} \setminus\{\ag \Rinv^- \b\},\ag \Vdash \phi\\
    \bbM,\ag \Vdash \pnlON\phi &\text{ iff } \text{ there is } \b \in \A \text{ s.t. }  \ag \neq \b, (\ag,\b)\in \R^+ \text{ and }  \bbM\cup\{\ag \Rinv^- \b\} \setminus\{\ag \Rinv^+ \b\},\ag \Vdash \phi\\
\end{array}
\]

We consider game states of the form $\mathbf{P}, \bbM, \ag:\phi$ and
$\mathbf{O}, \bbM, \ag:\phi$, where the model $\bbM$ over which the
game is played is explicit. \Cref{fig:game-dpnl} presents the rules for the
 semantic game  \dPNL. Adequacy is given by the following result, which has a similar proof to Theorem~\ref{th:adequacy}.

\begin{figure}
\begin{center}
    \resizebox{.90\textwidth}{!}
{
\begin{minipage}[t]{\textwidth}
\hrulefill
~~\begin{description}
\item[$(\bfP_{\pnlN})$] At $\mathbf{P}, \bbM,  \ag: \pnlN \phi$, 
     \I choose $\b,\c\in A$ s.t. $(\b,\c)\not\in \R^+$ and the game continues with 
    $\bfP, (\bbM\cup\{\b \Rinv^- \c\}),\ag:  \phi$. \You win if there are no such $\b$ and $\c$. 
\item[$(\bfO_{\pnlN})$] \vspace{-2mm}At $\bfO, \bbM,  \ag: \pnlN \phi$, 
     \You choose $\b,\c\in A$ s.t. $(\b,\c)\not\in \R^+$ and the game continues with 
    $\bfO, (\bbM\cup\{\b \Rinv^- \c\}),\ag:  \phi$. \I win if there are no such $\b$ and $\c$. 
\item[$(\bfP_{\pnlP})$] At $\mathbf{P}, \bbM,  \ag: \pnlP \phi$, \I choose 
     $\b,\c\in A$ s.t. $(\b,\c)\not\in \R^-$  and the game continues with 
    $\bfP, (\bbM\cup\{\b \Rinv^+ \c\}),\ag:  \phi$. \item[$(\bfO_{\pnlP})$] \vspace{-2mm}At $\bfO, \bbM,  \ag: \pnlP \phi$, 
    \You choose $\b,\c\in A$ s.t. $(\b,\c)\not\in \R^-$ and the game continues with 
    $\bfO, (\bbM\cup\{\b \Rinv^+ \c\}),\ag:  \phi$. \item[$(\bfP_{\pnlOP})$] 
    At $\bfP, \bbM,  \ag: \pnlOP \phi$, 
    \I choose $\b\in A$ s.t. $(\ag,\b)\in \R^{-}$ and the game continues with 
    $\bfP, (\bbM\cup\{\ag \Rinv^+ \b\} \setminus\{\ag \Rinv^- \b\}),\ag:  \phi$. 
    \You win if there is no such a $\b$. 
\item[$(\bfO_{\pnlOP})$] 
    \vspace{-2mm}At $\bfO, \bbM,  \ag: \pnlOP \phi$, 
     \You choose $\b\in A$ s.t. $(\ag,\b)\in \R^{-}$ and the game continues with 
    $\bfO, (\bbM\cup\{\ag \Rinv^+ \b\} \setminus\{\ag \Rinv^- \b\}),\ag: \phi$. 
    \I win if there is no such a $\b$. 
\end{description}
\vspace{-0.3cm}
\hrulefill
\end{minipage}
}
\end{center}
\vspace{-0.2cm}
\caption{Rules for \dPNL~adding/changing modalities.
The rules for $\pnlPN$, and the rules  $\bfP_{\pnlON}$ and $\bfO_{\pnlON}$ are similar and omitted.
    {The global addition modalities do not necessarily add a \emph{new} link. In
    $\bfP_{\pnlP}$, \I 
     (and \You in $\bfO_{\pnlP}$) always have a choice, say  $(\ag,\ag)\notin \R^{-}$.}
    \label{fig:game-dpnl}
}
\end{figure}

\begin{theorem}\label{th:adequacy2}
Let $\M$ be a \PNL-model, $\ag$ an agent, and $\phi$ a \dPNL~formula.
Then: (1)
 \I have a winning strategy for $\mathbf{G}_\M(\mathbf{P}, \M, \ag: \phi)$ iff $\M,\ag \models \phi$; and (2)
 \You have a winning strategy for $\mathbf{G}_{\M}(\mathbf{P}, \M, \ag: \phi)$ iff $\M,\ag\not \models \phi$.
\end{theorem}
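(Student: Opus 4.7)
The plan is to proceed by simultaneous induction on the structure of the formula $\phi$, exactly as in the proof of Theorem~\ref{th:adequacy}, noting that the two claims (1) and (2) are dual and follow from each other by the determinacy of the semantic game (which is preserved in the presence of dynamic operators, since every rule still strictly decreases the syntactic complexity of the involved formula, keeping the game tree of finite height as in Proposition~\ref{prop:finiteheight}). The cases for propositional connectives, standard modalities $\dplus, \dminus, [A]$, and elementary formulas are already handled by Theorem~\ref{th:adequacy}, since these rules of the game do not depend on the structure of the model beyond its truth conditions.

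The new work lies in verifying the cases for the dynamic operators $\pnlP, \pnlN, \pnlPN, \pnlOP, \pnlON$. For each of these, the corresponding game rule in Figure~\ref{fig:game-dpnl} was designed to mirror the semantic clause verbatim. For instance, consider the case $\phi = \pnlP \psi$. If $\M, \ag \models \pnlP \psi$, then by definition there exist $\b, \c \in \A$ with $(\b,\c) \notin \R^-$ such that $\M \cup \{\b \Rinv^+ \c\}, \ag \models \psi$. In the game state $\bfP, \M, \ag : \pnlP\psi$, I can select precisely these witnesses and move to $\bfP, \M\cup\{\b \Rinv^+ \c\}, \ag : \psi$; by the inductive hypothesis applied to $\psi$ on the updated model, I have a winning strategy from there. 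Conversely, if $\M, \ag \not\models \pnlP\psi$, then for every choice of $\b,\c$ with $(\b,\c)\notin \R^-$, the updated model falsifies $\psi$ at $\ag$, so by the inductive hypothesis You have a winning strategy regardless of My choice; and if there are no such $\b,\c$ at all, You win immediately by the rule $(\bfP_{\pnlP})$. The dual reasoning handles $(\bfO_{\pnlP})$, and the cases for $\pnlN$ and $\pnlPN$ are analogous. The local change operators $\pnlOP$ and $\pnlON$ work identically, except that the model is updated by both adding and removing one link.

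The only subtlety worth flagging is that the induction must be on formula structure alone, not on the model, because the model can grow under the adding modalities. This is unproblematic because, by inspection of the game rules, every move on a state $\bfQ, \M', \ag : \phi'$ reduces the top-level dynamic or static operator of $\phi'$ and the resulting subgame is played on the modified model, so the relevant complexity measure is the subformula structure of $\phi$, which is well-founded independently of the sequence of models visited along a run. Thus the induction closes uniformly, and both (1) and (2) follow together with the determinacy argument used in Section~\ref{sec:adequacy}.
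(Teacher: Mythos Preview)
Your proposal is correct and follows essentially the same approach as the paper, which simply states that Theorem~\ref{th:adequacy2} ``has a similar proof to Theorem~\ref{th:adequacy}'' without further detail; your write-up is a faithful and more explicit elaboration of that induction, including the useful observation that the induction is on the formula (not the model) so that model updates along a run are harmless. One small point you leave implicit is that each update in the rules of Figure~\ref{fig:game-dpnl} yields again a \PNL-model (reflexivity of $\R^+$, symmetry, and non-overlapping are preserved by the side conditions), which is what licenses applying the inductive hypothesis to the modified model.
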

\begin{example}\label{ex:consensus}
    Let $\M_2$ be as in \Cref{ex:balance}. 
    \I have a winning strategy for the game state $\bfP,\bbM_2,\ag : \pnlON (4B)$:
    \I just need to change the relation  $\ag \R^+ \c$ to obtain 
    the model $\M_1$ (where \I have a winning strategy for $4B$). 
     \You do not have a winning strategy for $\bfO,\bbM_2,\ag : \pnlON \pnlON (4B)$
    (and \I win $\bfP,\bbM_2,\ag : \neg(\pnlON \pnlON (4B))$). In words, 
    \You cannot enforce \emph{balance} by making $\ag$ disagree with her two friends. 
    Finally, the formula $[A]\bminus \bot$ characterizes ``reconciliation'' in a network \cite{DBLP:journals/logcom/PedersenSA21}, 
    where there are no disagreements between agents. \I have a winning strategy in the game 
    $\bfP, \bbM_2,\ag : \dplus \pnlOP [A] \bminus\bot$. (See the outputs of the tool in  \Cref{ap:examples} and \cite{tool}). 
\end{example}

Defining a sequent system for \dPNL\ would require passing through a
provability game $\mathbf{dDG}$ as done in Section~\ref{sec:dis-game}. We 
skip this step since it is similar to the case of \PNL. Instead, we will go
directly to the design of a proof system, which turns out to be a non-trivial
task.

The problem is that the new modalities update the relational values and multisets contexts are not adequate for handling  this {\em linear} behavior.  Hence relational atoms will be stored in a separate {\em linear context}, where
information can be updated. 
A {\em relational context} $\mathcal{R}$ is a \emph{set} containing only relational predicates, and a {\em relational sequent} has the form
$
\rs{\rc}{\Gamma}{\Delta}
$, where $\Gamma,\Delta$ are multisets of labeled formulas (hence no relational atoms).

 The label sequent system $\dDS$~for
\dPNL~is depicted in \Cref{fig:dDS}. The rules for the other connectives
are obtained by adapting those in  \Cref{fig:calculus} with relational contexts.  Rule $L_{\Diamond^\pm}$ introduces the  predicate
$R(i,j)$, for a fresh $j$, into the  context $\mathcal{R}$. The
proviso of the rules for the 
global adding-link modalities guarantee non-overlapping, and 
 the rules
for $\pnlOP$ forbid adding into $\mathcal{R}$ the atom $R^-(i,i)$. Moreover, 
the only way of adding new elements
into $\mathcal{R}$ is using the rules $L_{\Diamond^\pm}$ and those for $\pnlP$,
$\pnlN$ and $\pnlPN$. This explains the additional hypothesis in the theorem
below, which is proved in Appendix~\ref{app:proofs}.

\begin{figure}
\noindent
\hspace{-0.2cm}
\resizebox{.82\textwidth}{!}{
\noindent\begin{minipage}[t]{\textwidth}
    \begin{tabular}{ccc}
	{
         \begin{prooftree}
        \hypo {\rs{\rc,R^+(i,j)}{\Gamma,k:\phi}{\Delta}}
        \infer1 [$(L_{\pnlP})$]{\rs{\rc}{\Gamma,k:\pnlP \phi}{\Delta}}
        \end{prooftree}
        }
        & 
         {\begin{prooftree}
        \hypo {\rs{\rc,R^+(i,j)}{\Gamma}{\Delta,k:\phi}}
        \infer1 [$(R_{\pnlP})$]{\rs{\rc}{\Gamma}{\Delta,k:\pnlP \phi}}
        \end{prooftree}
        }
&
        {
        \begin{prooftree}
        \hypo {\rs{\rc,R^-(i,j)}{\Gamma,k:\phi}{\Delta}}
        \infer1 [$(L_{\pnlN})$]{\rs{\rc}{\Gamma,k:\pnlN k:\phi}{\Delta}}
        \end{prooftree}
        }
   \\\\
        {  \begin{prooftree}
        \hypo {\rs{\rc,R^-(i,j)}{\Gamma}{\Delta,k:\phi}}
        \infer1 [$(R_{\pnlN})$]{\rs{\rc}{\Gamma}{\Delta,k:\pnlN \phi}}
        \end{prooftree}
        }
&
	{
         \begin{prooftree}
        \hypo {\rs{\rc,R^+(i,j)}{\Gamma,k:\phi}{\Delta}}
        \infer1 [$(L_{\pnlOP})$]{\rs{\rc,R^-(i,j)}{\Gamma,k:\pnlOP \phi}{\Delta}}
        \end{prooftree}
        }
        & 
        {\begin{prooftree}
        \hypo {\rs{\rc,R^+(i,j)}{\Gamma}{\Delta,k:\phi}}
        \infer1 [$(R_{\pnlOP})$]{\rs{\rc,R^-(i,j)}{\Gamma}{\Delta,k:\pnlOP \phi}}
        \end{prooftree}
        }
\\\\        
        {
        \begin{prooftree}
        \hypo {\rs{\rc,R^-(i,j)}{\Gamma,k:\phi}{\Delta}}
        \infer1 [$(L_{\pnlON})$]{\rs{\rc,R^+(i,j)}{\Gamma,k:\pnlON \phi}{\Delta}}
        \end{prooftree}
        \medskip
        }
        &
        {\begin{prooftree}
        \hypo {\rs{\rc,R^-(i,j)}{\Gamma}{\Delta,k:\phi}}
        \infer1 [$(R_{\pnlON})$]{\rs{\rc,R^+(i,j)}{\Gamma}{\Delta,k:\pnlON \phi}}
        \end{prooftree}
        }
        &
        {
         \begin{prooftree}
         \hypo {\rs{\rc, R^\pm(i,j) }{\Gamma, j:\phi}{\Delta}}
             \infer1 [\((L_{\Diamond^\pm})\)]{\rs{\rc}{\Gamma, i: \Diamond^\pm \phi}{ \Delta}}
        \end{prooftree}
	}
\end{tabular}
\end{minipage}
}
\vspace{-0.2cm}
\caption{System $\dDS$. Rules $L/R_{\pnlP}$  (resp. $L/R_{\pnlN}$) have the proviso that $R^-(i,j)\not\in\mathcal{R}$ (resp.
$R^+(i,j)\not\in\mathcal{R}$), modulo symmetry (see \Cref{foot:sym}). In rules $L/R_{\pnlON}$, $i\neq j$. 
Rules for $\pnlPN$ are similar and omitted. In $L_{\Diamond^\pm}$, $j$ is fresh. 
\label{fig:dDS}}
\end{figure}

\begin{theorem}\label{thm:adequacy-sequent-dyn}
Let  $\Gamma$ and $\Delta$ be multisets of \dPNL~formulas
not containing  relational predicates.
\I have a winning strategy in the disjunctive game $\mathbf{dDG}(\Gamma\seq\Delta)$ iff $\Gamma\seq\Delta$ is provable in $\dDS$.
\end{theorem}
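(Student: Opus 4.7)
The plan is to proceed by induction on the structure of winning strategies/proofs, mirroring the argument for Theorem~\ref{thm:adequacy-sequent} but accounting for the fact that $\dDS$ operates with an explicit linear relational context $\mathcal{R}$ that evolves during proof search. First, I would lift the adequacy results of Sections~\ref{sec:pnl} and~\ref{sec:dis-game} to \dPNL. Since the semantic game in \Cref{fig:game-dpnl} now carries the model $\M$ inside each game state, the disjunctive game $\mathbf{dDG}$ must track, within each disjunct, the current relational information. Disjunctive game states in $\mathbf{dDG}$ can thus be read as relational sequents $\rs{\rc}{\Gamma}{\Delta}$, where $\mathcal{R}$ records exactly the positive and negative links available in the branch-local model. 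The invariant to maintain along any play/derivation is ``$\mathcal{R}$ faithfully encodes the current $\R^+,\R^-$''.

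Next, I would establish a rule-by-rule correspondence analogous to Theorem~\ref{thm:adequacy-sequent}. Propositional and elementary cases are unchanged, and the rules for $\dplus$, $\dminus$ and $[A]$ extend verbatim, except that the witness nominal $j$ introduced by $L_{\Diamond^\pm}$ now populates $\mathcal{R}$ with $R^{\pm}(i,j)$, mirroring the role of the chosen successor in the semantic game. For the dynamic modalities, each strategic move in $\mathbf{dDG}$ on a formula $\pnlP\phi$, $\pnlN\phi$, $\pnlOP\phi$ or $\pnlON\phi$ corresponds to updating $\mathcal{R}$ with the added (or replaced) link. The provisos $R^{\mp}(i,j)\notin\mathcal{R}$ on the global-adding rules translate the side condition $(\b,\c)\notin \R^{\mp}$ from \Cref{fig:game-dpnl}; the replacement pattern of the local-change rules (consuming $R^-(i,j)$ from $\mathcal{R}$ and producing $R^+(i,j)$ in $L/R_{\pnlOP}$, dually for $L/R_{\pnlON}$) captures that these operators \emph{modify} links rather than merely extend the model. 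Here the linear treatment of $\mathcal{R}$ is essential: if $\mathcal{R}$ were a multiset permitting contraction, the old link would persist and soundness would fail.

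For the left-to-right direction, given a winning strategy for \Me in $\mathbf{dDG}(\Gamma\seq\Delta)$ (starting with empty $\mathcal{R}$, since by hypothesis no relational predicates occur in $\Gamma,\Delta$), I would build a $\dDS$-proof by induction on the strategy tree, converting \My scheduling/duplication steps into contractions, \My \textbf{(Move)} steps into the matching logical or dynamic rule, and \Your branching into branching sequent rules. The elementary closure condition at the leaves is handled by Lemma~\ref{lemma:init}, now relativised so that reflexivity/non-overlapping/initial-axiom checks read from the current $\mathcal{R}$. For the converse, I would induct on the height of a $\dDS$-derivation, exhibiting for each rule application a strategy-preserving move in $\mathbf{dDG}$; the freshness provisos on $L_{\Diamond^\pm}$ and the global-adding rules ensure that the introduced nominal/link is a legitimate choice in every model consistent with the branch, using \Cref{rem:surjg} and Proposition~\ref{prop:bestchoice}.

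The main obstacle is the soundness of the local-change rules $L/R_{\pnlOP}$ and $L/R_{\pnlON}$, because each of them destructively rewrites $\mathcal{R}$. I must therefore prove the preservation of the invariant that $\mathcal{R}$ correctly records the signs of the currently toggled links: this requires verifying that every rule in \Cref{fig:dDS} either only adds an atom whose dual is absent (captured by the provisos on $\pnlP,\pnlN,\pnlPN$) or swaps precisely one existing atom for its dual (the $\pnlOP,\pnlON$ rules), and that the side condition $i\neq j$ in $L/R_{\pnlON}$ together with the reflexivity of $R^+$ rules out the forbidden toggle at the diagonal. Once this invariant is isolated and established by induction, both directions follow by a routine adaptation of the arguments for Theorems~\ref{th:adequacy2} and~\ref{thm:adequacy-sequent}, with the treatment of infinite branching in $\bfO_{\pnlP},\bfO_{\pnlN},\bfO_{\pnlPN}$ reduced to a finite ``best choice'' as in Proposition~\ref{prop:bestchoice}.
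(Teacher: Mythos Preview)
Your overall plan is sound and in the same spirit as the paper's argument, which itself is little more than ``proceed as in Theorem~\ref{thm:adequacy-sequent}''. The paper's proof makes exactly one explicit technical observation before deferring to the earlier theorem, and it is one you gloss over: the shape of $L_{\Diamond^\pm}$ in $\dDS$ is \emph{not} the same as in $\DS$. In $\DS$ the left rule for $\Diamond^\pm$ splits into the two alternative single-premise rules $(L_{\Diamond^\pm})_1$ and $(L_{\Diamond^\pm})_2$, corresponding directly to \My choice at $\bfO_{\Diamond^\pm}$ between the relational atom and the body. In $\dDS$ there is a \emph{single} rule whose premise simultaneously carries both $R^\pm(i,j)$ (in $\mathcal{R}$) and $j:\phi$ (in $\Gamma$). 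So your claim that the modal rules ``extend verbatim'' is not accurate, and your rule-by-rule correspondence does not literally go through at this point: there is no single move in the game that matches this combined premise. The paper resolves this by observing that the combined rule
\[
\infer[(L'_{\Diamond^\pm})]{\Gamma, i:\Diamond^\pm\phi \seq \Delta}{R^\pm(i,j),\Gamma, j:\phi \seq \Delta}
\]
is already admissible in $\DS$ via contraction (duplicate $i:\Diamond^\pm\phi$, apply one branch-rule to each copy), so the change in rule shape does not affect provability. You need the same observation to close your inductive correspondence.

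Apart from that, your proposal is considerably more detailed than the paper's proof: the paper does not spell out the invariant on $\mathcal{R}$, the soundness of the destructive local-change rules, or the extension of Proposition~\ref{prop:bestchoice} to the global-adding modalities. These elaborations are reasonable and would indeed be needed in a fully worked-out argument, but the paper is content to leave them implicit under ``follows the same lines''.
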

 
\section{Concluding Remarks}\label{sec:conc}
We have introduced two new techniques for \PNL~\cite{DBLP:journals/logcom/PedersenSA21}, with the aim of
formally reasoning about positive and negative relations among agents
and group polarization: a satisfiability game that allows for the
verification of properties within concrete networks of agents; and
a validity game with the corresponding cut-free sequent calculus. 
Our
contributions offer promising avenues for automated reasoning, as demonstrated
by our prototypical tool \cite{tool}. Furthermore, by showing that reasoning about frame
properties of the underlying model can be
delayed until reaching elementary games/formulas, we can modularly adapt to
different relational properties. 

Currently, we are exploring extensions that relax
symmetry assumptions, allowing for representing situations where agent $a$ may
influence the opinion of $b$ but not the other way around. Additionally, we are
investigating the concept of ``budget'' as in the game proposed in
\cite{DBLP:conf/tableaux/LangOPF19} to characterize scenarios where proponents
and opponents operate within a limited \emph{political capital}, where
adding/changing relations can potentially decrease such a capital. 
To this end, the preference of spending as little capital as possible could be expressed in a combination of $\PNL$ with a suitable \emph{choice logic}, i.e., a logic where preferences are definable at the object level. Semantic games for choice logics have been investigated in \cite{Freiman2023TruthLogic} and the lifting of game-induced choice logic, \textbf{GCL}, to a provability game and proof system was demonstrated in \cite{Freiman2023}.
Finally, following
the techniques developed in \cite{DBLP:journals/jlap/OlartePR23} 
for analyzing sequent systems in rewrite logic, we are extending 
our tool \cite{tool} to also support the sequent calculi proposed here. 

This work can be seen as a continuation of a program of lifting semantic games
to analytic calculi \cite{DBLP:journals/sLogica/FermullerM09,Pavlova2021}. Our
approach is a refinement of previous work on modal logic
\cite{DBLP:conf/wollic/Freiman21,HybrJour}  as it replaces model checking at
the level of axioms with explicit rules for the classes of $\PNL$ and
$\cc$-$\PNL$ models. We therefore  provide hand-tailored systems for reasoning about
group polarization and opens up the aforementioned routes to mechanization.

 \bibliographystyle{plain}

\appendix

\section{Some selected proofs}\label{app:proofs}

{\bf Theorem~\ref{th:adequacy}.}
\textit{Let $\M$ be a \PNL-model, $\ag$ an agent, and $\phi$ a formula.
\begin{enumerate}
\item \I have a winning strategy for $\mathbf{G}_\M(\mathbf{P}, \ag: \phi)$ iff $\M,\ag \models \phi$. 
\item \You have a winning strategy for $\mathbf{G}_{\M}(\mathbf{P}, \ag: \phi)$ iff $\M,\ag\not \models \phi$.
\end{enumerate}
}
\begin{proof}
Both directions of (1) and (2) are shown simultaneously by induction on the
degree\footnote{Observe that, at this point of the text, the degree has not been
introduced yet. Its definition can be found just above Lemma~\ref{lemmamodel}
on page 8.} of the game state $g=\mathbf{P}, \ag:\phi$. 

If $\phi$ is elementary, then the result trivially follows by the definition. 

If $g=\mathbf{P}, \ag:\phi_1\wedge\phi_2$ then 
\I have a winning strategy for  $\mathbf{G}_\M(\mathbf{P}, \ag:\phi_1\wedge\phi_2)$ iff \I have 
winning strategies for both $\mathbf{G}_\M(\mathbf{P}, \ag:\phi_1)$ and $\mathbf{G}_\M(\mathbf{P}, \ag:\phi_2)$. By the inductive
hypothesis, this is the case iff $\M,\ag \models \phi_1$ and $\M,\ag \models \phi_2$, which is equivalent to $\M,\ag \models \phi_1\wedge\phi_2$.

If $g=\mathbf{P}, \ag: \Diamond^\pm \phi$, \I have a winning strategy for $\mathbf{G}_\M(\mathbf{P}, \ag: \Diamond^\pm \phi)$ iff there is a $\R^\pm$-successor $\b$ of $\ag$ and
\I have a winning strategy for  $\mathbf{P},\b:\phi$. By the inductive hypothesis, this occurs iff  $\M,\b \models \phi$, but then 
$\M,\ag \models \Diamond^\pm \phi$. 

The other cases are similar.
\end{proof}

\bigskip

\noindent
{\bf Lemma~\ref{lem:closure}.}
\textit{Let $\pi$ be as above. Then:\\
1) Let $g\in \pi$ be a non-elementary game state labelled ``Y'' in the semantic game. Then at least one successor of $g$ appears along $\pi$.\\
2) Let $g\in \pi$ be a non-elementary game state labeled ``I'' in the semantic game. Then all successors of $g$ appear along $\pi$.
}

\begin{proof}
First, note that since \You play according to \Your winning strategy,
$\pi$ does not end in a winning disjunctive state whose elementary party is winning for \Me. This means that case (C1) in the definition of $\sigma$ is never reached.

\begin{enumerate}
    \item Suppose $g$ appeared in $\pi$ at stage $n\geq 0$ in the above construction. Since every pair appears in the enumeration infinitely often, there is some minimal $m \geq n$ such that $m=\#(g,h)$, for some $h$. At step $m$ in the execution of $\sigma$ against \Your winning strategy, the current disjunctive state is of the form $D'\bigvee g$. According to $\sigma$, \I underline $g$ and \You move to some successor $h'$, according to \Your winning strategy. This means the new game state is of the form $D' \bigvee h'$, hence $h'$ is the successor of $g$ appearing along $\mathfrak{h}$.
    \item Suppose $g$ appeared in $\pi$ at stage $n\geq 0$. Now we additionally fix an arbitrary successor $h$ of $g$ in the evaluation game. By the properties of $\#$, there is a minimal $m\geq n$ such that $m=\#(g,h)$. Since \I always first duplicate game states labeled ``I'', before \I make a move into them, $g$ does not disappear. Hence, at step $m$ in the execution of $\sigma$, the current disjunctive state is of the form $D'\bigvee g$. According to $\sigma$, \I duplicate $g$ and go to $h$ in one copy, i.e. the new disjunctive state is $D'\bigvee g \bigvee h$, which shows that $h$ appears along $\pi$. 
\end{enumerate} 
\end{proof}

\bigskip

\noindent {\bf Proposition \ref{prop:bestchoice}} { \it Let $j$  be a nominal different from $i$ not occurring in $D$ nor in $\phi$. Then:
\noindent \textbf{(1)}  \You have a winning strategy in $\mathbf{DG}(D \bigvee \mathbf{P},i: [A]\phi)$ iff \You have a winning strategy in $\mathbf{DG}(D \bigvee \mathbf{P},j: \phi)$, and similarly for $\mathbf{DG}^\cc$.

    \noindent \textbf{(2)} \You have a winning strategy in $\mathbf{DG}(D \bigvee \mathbf{O},i:\Diamond^\pm \phi)$ iff \You have winning strategies in both $\mathbf{DG}(D \bigvee \mathbf{P},\_: R^\pm(i,j))$ and $\mathbf{DG}(D \bigvee \mathbf{O},j:\phi)$, and similarly for $\mathbf{DG}^\cc$.
    }

The proof is split into a sequence of lemmas. First, we need to define substitutions formally. For a sequence $x$ of finite or infinite length, let $x_n$ denote its $n$-th element (if defined),  and let $\mathrm{range}(x)=\{x_i:i\in \mathbb{N}\}$. Let $\phi$ be a formula and $a$ and $b$ two sequences of nominals of the same length, where every nominal occurs only once in each sequence. We define 
$\phi[a/b]$ as the formula obtained by simultaneously substituting for every number $n$ all occurrences of $a_n$ in $\phi$ with $b_n$. For example, let $a = \langle i, j \rangle$, $b = \langle k, l \rangle$ and $\phi = R^+(i,j) \vee R^-(j,l)$. Then $\phi[a/b] = R^+(k,l)\vee R^-(l,l)$.  As another example let $a = \langle i_1, i_2, ... \rangle$ and $b = \langle i_2, i_3, ... \rangle$. Then $R^+(i_1,i_2)[a/b] = R^+(i_2,i_3)$, since the substitution happens simultaneously. We extend the notion of substitution to game states: for a game state $g=\mathbf{Q}, i: \phi$ of the evaluation game and two sequences of nominals $a,b$, we define the substitution $g[a/b]$ as $\mathbf{Q}, i[a/b]:\phi[a/b]$. Similarly to histories, strategies, and disjunctive states.

\begin{lemma}\label{lem:techicalnom}
Let $\mathbb{M}_1=(\A,\R^+,\R^-,\V,\g_1)$ and $\mathbb{M}_2=(\A,\R^+,\R^-,\V,\g_2)$ be named and $\g_2(i[b/a])=\g_1(i)$ for all nominals $i$. Then for all game states $g$, $\mathbf{G}_{\mathbf{M}_1}(g)\cong \mathbf{G}_{\mathbb{M}_2}(g[b/a])$.
\end{lemma}
\begin{proof}
By the assumption, $\g_2$ is surjective, even if restricted to
$N[b/a]=\{i[b/a]:i\in N\}$. By Proposition~\ref{prop:surjg}, it is, therefore,
enough to prove $\mathbf{G}_{\mathbb{M}_1}(g)\cong
\mathbf{G}^{N[b/a]}_{\mathbb{M}_2}(g[b/a])$. We proceed by  
induction on the degree of $g$.

If $g$ is elementary and of the form $\mathbf{P}, i: j$ then it is winning for \Me over $\mathbb{M}_1$ if and only if $\g_1(i) = \g_1(j)$. By assumption, this is equivalent to $\g_2(i[b/a])=g_2(i[b/a])$, which means that $\mathbf{O},i[b/a]:j[b/a]$ is winning for me over $\mathbb{M}_2$. The other elementary cases are similar.

As an example of a simple induction step, we consider $\mathbf{P},i:\phi_1 \vee
\phi_2$. If \I have a winning strategy for this game state over
$\mathbb{M}_1$, then there is some $k\in \{1,2\}$ such that \I have a winning
strategy in $\mathbf{P},i:\phi_k$. By the inductive hypothesis, \I have a
winning strategy in $\mathbf{P},i[b/a]:\phi_k[b/a]$ over $\mathbb{M}_2$. Hence,
\I have a winning strategy in $(\mathbf{P},i:\phi_1 \vee \phi_2)[b/a])$ over
that model. The other direction is similar.

The most interesting induction step is for the modal rules, so let us consider
$\mathbf{O},i:\dplus \psi$. Suppose, \I have a winning strategy in
$\mathbf{G}_{\mathbb{M}_1}(\mathbf{O},i:\dplus \psi)$. Then, for every nominal
$j$, \I have winning strategies in
$\mathbf{G}_{\mathbb{M}_1}(\mathbf{P},j:R(i,j))$ and
$\mathbf{G}_{\mathbb{M}_1}(\mathbf{O},j: \psi)$. By the inductive hypothesis, \I
have winning strategies in
$\mathbf{G}^{N[b/a]}_{\mathbb{M}_2}(\mathbf{P},j[b/a]:R(i[b/a],j[b/a]))$ and
$\mathbf{G}^{N[b/a]}_{\mathbb{M}_2}(\mathbf{O},j[b/a]: \psi[b/a])$. In other
words, \I have winning strategies in
$\mathbf{G}^{N[b/a]}_{\mathbb{M}_2}(\mathbf{P},k:R(i[b/a],k)$ and
$\mathbf{G}^{N[b/a]}_{\mathbb{M}_2}(\psi[b/a])$, for every $k\in N[b/a]$. Since
branching in this game is restricted over $N[b/a]$, we conclude that \I have a
winning strategy in
$\mathbf{G}^{N[b/a]}_{\mathbb{M}_2}((\mathbf{P},i:\dplus \psi)[b/a])$. The other
direction, as well as the other cases of induction steps, are similar.
\end{proof}

\begin{lemma}\label{lem:samename}
If $\g(k) = \g(l)$, then $\mathbf{G}_\mathbb{M}(g)\cong\mathbf{G}_\mathbb{M}(g[k/l])$.
\end{lemma}

\begin{proof}
We show that $\g(i[k/l])=\g(i)$ for all nominals $i$. If $i\ne k$, then $\g(i[k/l])=\g(i)$. If $i=k$, then by the assumption, $\g(i[k/l])=\g(l)=\g(k)=\g(i)$. The statement of the lemma follows from this fact and Lemma~\ref{lem:techicalnom}.
\end{proof}

For a model $\mathbb{M}$ and two sequences of nominals $a,b$, let $\mathbb{M}[a/b]$ be the same as $\mathbb{M}$, except for the denotation function: $\g_{[a/b]}(i) = \g(i[a/b])$.

\begin{lemma}\label{lem:substsurj}
Let $\mathbb{M}$ be named and $a,b$ two sequences of nominals with $\mathrm{range}(a)\subseteq\mathrm{range}(b)$. Then $\mathbb{M}{[a/b]}$ is $N[b/a]$-named. Furthermore, $\mathbf{G}_\mathbb{M}(g)\cong \mathbf{G}_{\mathbb{M}[a/b]}(g[b/a])$.
\end{lemma}

\begin{proof}
We have to show that $\g_{[a/b]}$ is surjective when restricted to $N[b/a]=\{i[b/a]:i\in N\}$. Let $\ag$ be an agent and $i$ its name under $\g$. If $i\notin \mathrm{range}(b)$, then $i\notin \mathrm{range}(b)$ and we have
$i[b/a][a/b]=i[a/b]=i$. 
If $i \in \mathrm{b}$, then $i=b_m$ for some $m$. Then
$i[b/a][a/b]=b_m[b/a][a/b]=a_m[a/b]=b_m=i$. 
This shows that $\g_{[a/b]}(i[b/a])=\g(i[b/a][a/b])=\g(i)$, i.e. $\ag$ has a
name in $N[b/a]$ under $g_{[a/b]}$. This identity together with
Lemma~\ref{lem:techicalnom} also implies the strategic equivalence of the game
$\mathbf{G}_\mathbb{M}(g)$ and $\mathbf{G}_{\mathbb{M}[a/b]}(g[b/a])$.
\end{proof}

We are now ready to prove Proposition~\ref{prop:bestchoice}.

\begin{proof}[Proposition~\ref{prop:bestchoice}]
We will show (2), since (1) is similar and simpler. From right to left is trivial. For the other direction, assume that  \You have a winning strategy in $\mathbf{DG}(D \bigvee \mathbf{O},i:\dplus \phi)$ with $j$ as in the assumption. By Theorem~\ref{thm:adeq}, there is a named model $\mathbb{M}$ such that \You have winning strategies in $\mathbf{G}_\mathbb{M}(g)$ for all $g\in  D$ and in $\mathbf{O}, i: \dplus \phi$. The latter implies that \You have winning strategies in $\mathbf{G}_\mathbb{M}(\mathbf{O}, \_:  R^\pm(i,k))$ and $\mathbf{G}_\mathbb{M}(\mathbf{O}, k: \phi)$ for some nominal $k$. 

Let $j_1, j_2, ...$ be a sequence of nominals not occurring in $D$ or $\phi$ and different from $k$, $j$, and $i$. Let $a = \langle j,j_1, j_2,...\rangle$ and $b = \langle k,j,j_1,j_2,...\rangle$. We have that $\mathrm{range}(a)\subseteq \mathrm{range}(b)$, therefore Lemma~\ref{lem:substsurj} applies.  We have the following chain of equivalences:

\begin{align*}
\mathbf{G}_\mathbb{M}(\mathbf{O},k: \phi)    &\cong \mathbf{G}_{\mathbb{M}[a/b]}(\mathbf{O},k[b/a]:\phi[b/a]) &&\text{by Lemma~\ref{lem:substsurj}}\\
    &= \mathbf{G}_{\mathbb{M}[a/b]}(\mathbf{O},j:\phi[k/j]) &&\text{by conditions on } i,j,k\\
    &= \mathbf{G}_{\mathbb{M}[a/b]}(\mathbf{O},j[k/j]:\phi[k/j])\\
    &\cong \mathbf{G}_{\mathbb{M}[a/b]}(\mathbf{O},j: \phi) &&\text{by Lemma~\ref{lem:samename}} \text{ and }\g_{[a/b]}(j)=\g_{[a/b]}(k)
\end{align*}

A similar argument shows that $\mathbf{G}_\mathbb{M}(\mathbf{O},\_:
R^\pm(i,k))\cong \mathbf{G}_{\mathbb{M}[a/b]}(\mathbf{O},\_: R^\pm(i,j))$. By
this equivalence and the assumption, \You have winning strategies in
$\mathbf{O},\_:R^\pm(i,j)$ and $\mathbf{O},j:\phi$ over $\mathbb{M}[a/b]$.
Moreover, we obtain a winning strategy for \You for $g\in D$ by using the equivalence
 $
\mathbf{G}_\mathbb{M}(g)\cong 
\mathbf{G}_{\mathbb{M}[a/b]}(g[b/a]) \cong 
\mathbf{G}_{\mathbb{M}[a/b]}(g[k/j])\cong 
\mathbf{G}_{\mathbb{M}[a/b]}(g)
$, 
the same lemmas as before and the fact that no nominals from $a$ appear in $g$.
Since \You have winning strategies for the semantic games for every $g\in D$
and $\mathbf{O},\_:R^\pm(i,j)$ over $\M[a/b]$, \You have a winning strategy in
$\mathbf{DG}(D\bigvee \mathbf{O},\_:R^\pm(i,j))$, by Proposition~\ref{thm:adeq}
and the determinacy of the game. Similarly, we conclude that \You  have a
winning strategy in $\mathbf{DG}(D\bigvee \mathbf{O},j:\phi)$. \end{proof}

\bigskip

\noindent
{\bf Lemma~\ref{lemma:init}.}
\textit{ Let $\Gamma\seq \Delta$ be composed of elementary game states only. \I win the disjunctive game in $\Gamma \seq \Delta$ iff one of the following holds
\begin{itemize}
\item[i.] $R^-(i,i)\in\Gamma$ or $R^+(i,i)\in\Delta$ for some $i$;
\item[ii.] $\{R^+(i,j),R^-(i,j)\}\subseteq\Gamma$ for some $i\not=j$;
\item[iii.] $\Gamma\cap\Delta\not=\emptyset$.
\end{itemize}
In the case of collectively connected models, additionally, 
\begin{itemize}
\item[iv.]  $\{R^+(i,j),R^-(i,j)\}\subseteq\Delta$ for some $i\not=j$  \end{itemize}}

\begin{proof}
$\M_{\Gamma \Rightarrow \Delta}$ is non-overlapping: this follows by (ii). $\M_{\Gamma\Rightarrow \Delta}^\cc$ is collectively connected, since  $(iii)^+$ in the definition of $\R^+$ equivalent to $\neg((i)^+\vee (ii)^+\vee (i)^-\vee (ii)^-)$. $\M^\cc_{\Gamma \Rightarrow \Delta}$ is non-overlapping:  Suppose, $(\ag_i,\ag_j)\in \R^+\cap \R^-$. This is impossible since all possible cases in the definitions, in which $\ag_i$ and $\ag_j$ are connected by both relations, are excluded by our assumptions:
 
\begin{itemize}
    \item   $(i)^+$ and $(i)^-$: Excluded by $\neg(ii)$.
    \item $(i)^+$ and $(ii)^-$: Excluded by $\neg (iii)$.
    \item $(ii)^-$ and $(i)^-$: Excluded by $\neg (iii)$.
    \item $(ii)^-$ and $(ii)^-$: Excluded by $\neg (iv)$.
    \item $(iii)^+$, excludes $(i)^-$ and $(ii)^-$, hence $(\ag_i,\ag_j)\notin \R^-$.
\end{itemize}

By definition of the models, $\ag_i\in \V(p)$, whenever $i:p\in \Gamma$, $\ag_i\notin \V(p)$, whenever $i:p\in \Delta$, $(\ag_i, \ag_j)\in \R^\pm$, whenever $R^\pm(i,j)\in \Gamma$. If $R^\pm(i,j)\in \Delta$, then $(\ag_i,\ag_j)\in \R^\mp$. Since both models are non-overlapping, $(\ag_i,\ag_j)\notin \R^\pm$. Hence, all game states in $\Gamma \Rightarrow \Delta$ are winning for \You.
\end{proof}

\bigskip

\noindent
{\bf Lemma~\ref{lemma:norm}.}
\textit{  
The following weakening rules are admissible in $\DS$
\[
\infer[L_w]{\Gamma,i:\phi\seq\Delta}{\Gamma\seq\Delta} \qquad \infer[R_w]{\Gamma\seq\Delta,i:\phi}{\Gamma\seq\Delta}
\]
Moreover, in a bottom-up reading of derivations, the relational rules permutes up w.r.t. any other logical rule in $\DS$.
}

\begin{proof} The proof of weakening is standard. The
proof of permutability is by straightforward case analysis. For example, 
\[
\infer[no]{\Gamma, i: \dminus \phi\Rightarrow \Delta}
{\infer[(L_{\dminus})_2]{\Gamma, i: \dminus \phi\Rightarrow \Delta,R^+(k,l)}{\deduce{\Gamma, j: \phi\Rightarrow \Delta,R^+(k,l)}{\pi_1}}&\deduce{\Gamma, i: \dminus \phi\Rightarrow \Delta,R^-(k,l)}{\pi_2}}
\]
with $j$ free, can be transformed to
\[
\infer[L_c]{\Gamma, i: \dminus  \phi\Rightarrow \Delta}
{\infer[(L_{\dminus})_2]{\Gamma, i: \dminus\phi, i: \dminus \phi\Rightarrow \Delta}
{\infer[no]{\Gamma, j:\phi, i: \dminus\phi \Rightarrow \Delta}
{\deduce{\Gamma, j:\phi, i: \dminus\phi \Rightarrow  \Delta, R^+(k,l)}{\pi_1^w}&\deduce{\Gamma, j:\phi, i: \dminus\phi \Rightarrow  \Delta, R^-(k,l)}{\pi_2^w}}}}\]
where $\pi_1^w, \pi_2^w$ are the weakened versions of $\pi_1,\pi_2$ respectively.
\end{proof}

\noindent{\bf Theorem~\ref{thm:adequacy-sequent-dyn}.}
\textit{Let  $\Gamma$ and $\Delta$ be multisets of \dPNL~formulas
not containing  relational predicates.
\I have a winning strategy in the disjunctive game $\mathbf{dDG}(\Gamma\seq\Delta)$ iff $\Gamma\seq\Delta$ is provable in $\dDS$.}

\begin{proof}
First of all, we observe that the rule 
\begin{center}
 \begin{prooftree}
         \hypo {R^\pm(i,j),\Gamma, j:\phi\seq \Delta}
             \infer1 [\((L'_{\Diamond^\pm})\)]{\Gamma, i: \Diamond^\pm \phi\seq \Delta}
        \end{prooftree}
\end{center}        
is admissible in $\DS$. In fact, this is an easy consequence of the presence of the contraction rules. Hence, although the rule $(L_{\Diamond^\pm})$ in $\dDS$ is not directly defined via a provability game, it is equivalent to its contracted version.
The rest of the proof follows the same lines as the proof in Theorem~\ref{thm:adequacy-sequent}.
\end{proof}

 \section{Examples}\label{ap:examples}

Below we present  \My strategy for winning the game on  model $\M_1$ in \Cref{ex:balance}.
\footnote{The notation $[\checkmark,\texttt{Q}]$ means that \I have a winning strategy starting in that state
where $Q\in \{\I, \You\}$ moves. 
The notation $[\times,\texttt{Q}]$ means that \You have a winning strategy
where $Q\in \{\I, \You\}$ moves. 
}

\begin{Verbatim}[fontsize=\scriptsize]
python main.py examples/model-M1.maude a "lb(p)"

[✔,Y] : P @ a : (¬ (◆ ◆ p ∨ ◇ ◇ p) ∨ ◆ p) ∧ ¬ (◆ ◇ p ∨ ◇ ◆ p) ∨ ◇ p
├── [✔,I] : P @ a : ¬ (◆ ◆ p ∨ ◇ ◇ p) ∨ ◆ p
│   └── [✔,I] : P @ a : ◆ p
│       └── [✔,I] : P @ b : p
└── [✔,I] : P @ a : ¬ (◆ ◇ p ∨ ◇ ◆ p) ∨ ◇ p
    └── [✔,I] : P @ a : ¬ (◆ ◇ p ∨ ◇ ◆ p)
        └── [✔,Y] : O @ a : ◆ ◇ p ∨ ◇ ◆ p
            ├── [✔,Y] : O @ a : ◆ ◇ p
            │   ├── [✔,Y] : O @ a : ◇ p
            │   │   └── [✔,Y] : O @ c : p
            │   └── [✔,Y] : O @ b : ◇ p
            │       └── [✔,Y] : O @ c : p
            └── [✔,Y] : O @ a : ◇ ◆ p
                └── [✔,Y] : O @ c : ◆ p
                    └── [✔,Y] : O @ c : p
\end{Verbatim}

\I do not have a winning strategy for the game on the model 
$\M_2$ in \Cref{ex:balance}: 
\begin{Verbatim}[fontsize=\scriptsize]
python main.py examples/model-M2.maude a "lb(p)" --tree

[❌,Y] : P @ a : (¬ (◆ ◆ p ∨ ◇ ◇ p) ∨ ◆ p) ∧ ¬ (◆ ◇ p ∨ ◇ ◆ p) ∨ ◇ p
├── [❌,I] : P @ a : ¬ (◆ ◇ p ∨ ◇ ◆ p) ∨ ◇ p
│   ├── [❌,I] : P @ a : ¬ (◆ ◇ p ∨ ◇ ◆ p)
│   │   └── [❌,Y] : O @ a : ◆ ◇ p ∨ ◇ ◆ p
│   │       ├── [❌,Y] : O @ a : ◆ ◇ p
│   │       │   ├── [❌,Y] : O @ c : ◇ p
│   │       │   │   └── [❌,Y] : O @ b : p
│   │       │   ├── [✔,Y] : O @ a : ◇ p
│   │       │   └── [✔,Y] : O @ b : ◇ p
│   │       │       └── [✔,Y] : O @ c : p
│   │       └── [✔,Y] : O @ a : ◇ ◆ p
│   └── [❌,I] : P @ a : ◇ p
└── [✔,I] : P @ a : ¬ (◆ ◆ p ∨ ◇ ◇ p) ∨ ◆ p
    ├── [❌,I] : P @ a : ¬ (◆ ◆ p ∨ ◇ ◇ p)
    │   └── [❌,Y] : O @ a : ◆ ◆ p ∨ ◇ ◇ p
    │       ├── [❌,Y] : O @ a : ◆ ◆ p
    │       │   ├── [❌,Y] : O @ a : ◆ p
    │       │   │   ├── [❌,Y] : O @ b : p
    │       │   │   ├── [✔,Y] : O @ a : p
    │       │   │   └── [✔,Y] : O @ c : p
    │       │   ├── [❌,Y] : O @ b : ◆ p
    │       │   │   ├── [❌,Y] : O @ b : p
    │       │   │   └── [✔,Y] : O @ a : p
    │       │   └── [✔,Y] : O @ c : ◆ p
    │       │       ├── [✔,Y] : O @ a : p
    │       │       └── [✔,Y] : O @ c : p
    │       └── [✔,Y] : O @ a : ◇ ◇ p
    └── [✔,I] : P @ a : ◆ p
        ├── [❌,I] : P @ a : p
        ├── [❌,I] : P @ c : p
        └── [✔,I] : P @ b : p
\end{Verbatim}

And \I certainly win in the negated formula: 

\begin{Verbatim}[fontsize=\scriptsize]
python main.py examples/model-M2.maude a "~ lb(p)"
[✔,I] : P @ a : ¬ ((¬ (◆ ◆ p ∨ ◇ ◇ p) ∨ ◆ p) ∧ ¬ (◆ ◇ p ∨ ◇ ◆ p) ∨ ◇ p)
└── [✔,I] : O @ a : (¬ (◆ ◆ p ∨ ◇ ◇ p) ∨ ◆ p) ∧ ¬ (◆ ◇ p ∨ ◇ ◆ p) ∨ ◇ p
    └── [✔,Y] : O @ a : ¬ (◆ ◇ p ∨ ◇ ◆ p) ∨ ◇ p
        ├── [✔,I] : O @ a : ¬ (◆ ◇ p ∨ ◇ ◆ p)
        │   └── [✔,I] : P @ a : ◆ ◇ p ∨ ◇ ◆ p
        │       └── [✔,I] : P @ a : ◆ ◇ p
        │           └── [✔,I] : P @ c : ◇ p
        │               └── [✔,I] : P @ b : p
        └── [✔,Y] : O @ a : ◇ p
\end{Verbatim}

The winning strategy for the game 
$\bfP,\bbM_2,\ag : \pnlON (4B)$
is the following: 

\begin{Verbatim}[fontsize=\scriptsize]
python main.py examples/model-M2.maude a " (-) lb(p)"
[✔,I] : P @ a : (-)((¬ (◆ ◆ p ∨ ◇ ◇ p) ∨ ◆ p) ∧ ¬ (◆ ◇ p ∨ ◇ ◆ p) ∨ ◇ p)
└── [✔,Y] : P @ a : (¬ (◆ ◆ p ∨ ◇ ◇ p) ∨ ◆ p) ∧ ¬ (◆ ◇ p ∨ ◇ ◆ p) ∨ ◇ p
    ├── [✔,I] : P @ a : ¬ (◆ ◆ p ∨ ◇ ◇ p) ∨ ◆ p
    │   └── [✔,I] : P @ a : ¬ (◆ ◆ p ∨ ◇ ◇ p)
    │       └── [✔,Y] : O @ a : ◆ ◆ p ∨ ◇ ◇ p
    │           ├── [✔,Y] : O @ a : ◆ ◆ p
    │           │   ├── [✔,Y] : O @ a : ◆ p
    │           │   │   ├── [✔,Y] : O @ a : p
    │           │   │   └── [✔,Y] : O @ c : p
    │           │   └── [✔,Y] : O @ c : ◆ p
    │           │       ├── [✔,Y] : O @ a : p
    │           │       └── [✔,Y] : O @ c : p
    │           └── [✔,Y] : O @ a : ◇ ◇ p
    │               └── [✔,Y] : O @ b : ◇ p
    │                   ├── [✔,Y] : O @ a : p
    │                   └── [✔,Y] : O @ c : p
    └── [✔,I] : P @ a : ¬ (◆ ◇ p ∨ ◇ ◆ p) ∨ ◇ p
        └── [✔,I] : P @ a : ◇ p
            └── [✔,I] : P @ b : p
\end{Verbatim}

and \I can win the following game in \Cref{ex:consensus}.

\begin{Verbatim}[fontsize=\scriptsize]
python main.py examples/model-M2.maude a " ~ ( (-) (-) lb(p))"
[✔,I] : P @ a : ¬ ((-)(-)((¬ (◆ ◆ p ∨ ◇ ◇ p) ∨ ◆ p) ∧ ¬ (◆ ◇ p ∨ ◇ ◆ p) ∨ ◇ p))
└── [✔,Y] : O @ a : (-)(-)((¬ (◆ ◆ p ∨ ◇ ◇ p) ∨ ◆ p) ∧ ¬ (◆ ◇ p ∨ ◇ ◆ p) ∨ ◇ p)
    ├── [✔,Y] : O @ a : (-)((¬ (◆ ◆ p ∨ ◇ ◇ p) ∨ ◆ p) ∧ ¬ (◆ ◇ p ∨ ◇ ◆ p) ∨ ◇ p)
    │   └── [✔,I] : O @ a : (¬ (◆ ◆ p ∨ ◇ ◇ p) ∨ ◆ p) ∧ ¬ (◆ ◇ p ∨ ◇ ◆ p) ∨ ◇ p
    │       └── [✔,Y] : O @ a : ¬ (◆ ◆ p ∨ ◇ ◇ p) ∨ ◆ p
    │           ├── [✔,I] : O @ a : ¬ (◆ ◆ p ∨ ◇ ◇ p)
    │           │   └── [✔,I] : P @ a : ◆ ◆ p ∨ ◇ ◇ p
    │           │       └── [✔,I] : P @ a : ◇ ◇ p
    │           │           └── [✔,I] : P @ c : ◇ p
    │           │               └── [✔,I] : P @ b : p
    │           └── [✔,Y] : O @ a : ◆ p
    │               └── [✔,Y] : O @ a : p
    └── [✔,Y] : O @ a : (-)((¬ (◆ ◆ p ∨ ◇ ◇ p) ∨ ◆ p) ∧ ¬ (◆ ◇ p ∨ ◇ ◆ p) ∨ ◇ p)
        └── [✔,I] : O @ a : (¬ (◆ ◆ p ∨ ◇ ◇ p) ∨ ◆ p) ∧ ¬ (◆ ◇ p ∨ ◇ ◆ p) ∨ ◇ p
            └── [✔,Y] : O @ a : ¬ (◆ ◆ p ∨ ◇ ◇ p) ∨ ◆ p
                ├── [✔,I] : O @ a : ¬ (◆ ◆ p ∨ ◇ ◇ p)
                │   └── [✔,I] : P @ a : ◆ ◆ p ∨ ◇ ◇ p
                │       └── [✔,I] : P @ a : ◇ ◇ p
                │           └── [✔,I] : P @ c : ◇ p
                │               └── [✔,I] : P @ b : p
                └── [✔,Y] : O @ a : ◆ p
                    └── [✔,Y] : O @ a : p
\end{Verbatim}

In the same example, this is \My winning strategy for 
    $\bfP, \bbM_2,\ag : \dplus \pnlOP [A] \bminus\bot$:

\begin{Verbatim}[fontsize=\scriptsize]
python main.py examples/model-M2.maude a "<+> (+) [A] [-] (p /\ ~ p)"
[✔,I] : P @ a : ◆ (+)[A]¬ (◇ ¬ (p ∧ ¬ p))
└── [✔,I] : P @ b : (+)[A]¬ (◇ ¬ (p ∧ ¬ p))
    └── [✔,Y] : P @ b : [A]¬ (◇ ¬ (p ∧ ¬ p))
        ├── [✔,I] : P @ a : ¬ (◇ ¬ (p ∧ ¬ p))
        │   └── [✔,Y] : O @ a : ◇ ¬ (p ∧ ¬ p)
        ├── [✔,I] : P @ b : ¬ (◇ ¬ (p ∧ ¬ p))
        │   └── [✔,Y] : O @ b : ◇ ¬ (p ∧ ¬ p)
        └── [✔,I] : P @ c : ¬ (◇ ¬ (p ∧ ¬ p))
            └── [✔,Y] : O @ c : ◇ ¬ (p ∧ ¬ p)
\end{Verbatim}

\end{document}